\documentclass[1p]{elsarticle}
\usepackage{lineno}
\usepackage{hyperref}
\modulolinenumbers[5]
%\journal{Discrete Mathematics}
\usepackage{fix-cm}
\usepackage{mathtools}
\usepackage{amsmath,amsthm}
\usepackage{graphicx}
\usepackage{latexsym,enumerate,amssymb,amsbsy}
\usepackage{graphics,color}
\usepackage{verbatim}
\usepackage{url}
\usepackage{dblfloatfix}
\hyphenation{op-tical net-works semi-conduc-tor}
\usepackage{mathptmx}
\usepackage{pgf,tikz}
\usetikzlibrary{trees,automata,arrows,shapes.geometric}
\usepackage{xfrac}
\usepackage{amsfonts}
\usepackage{stmaryrd}
\usepackage{algorithm}
\usepackage[noend]{algpseudocode}
\usepackage{url}
\usepackage{colortbl}

\newtheorem{proposition}{Proposition}%[section]
\newtheorem{theorem}[proposition]{Theorem}
\newcommand{\proofofref}{}
\newproof{zproofof}{Proof of \proofofref}
\newenvironment{proofof}[1]
{\renewcommand{\proofofref}{#1}\zproofof}
{\endzproofof}
\newtheorem{corollary}[proposition]{Corollary}
\newtheorem{lemma}[proposition]{Lemma}

%[section]
\newtheorem{example}{Example}%[section]

\newtheorem{remark}{Remark}
\newcommand*{\QEDB}{\hfill\ensuremath{\square}}

\newcommand{\etal}{\emph{et al.}}
\newcommand{\eg}{\emph{e.g.}}
\newcommand{\ie}{\emph{i.e.}}

\newcommand{\bbra}[1]{\left\llbracket{#1}\right\rrbracket}
\def\F{{\mathbb{F}}}

\def\bv{{\mathbf{v}}}
\def\bw{{\mathbf{w}}}
\def\bS{{\mathbf{S}}}
\def\bu{{\mathbf{u}}}
\def\bs{{\mathbf{s}}}
\def\bm{{\mathbf{m}}}
\def\0{{\mathbf{0}}}
\def\1{{\mathbf{1}}}
\def\ba{{\mathbf{a}}}
\def\bb{{\mathbf{b}}}
\def\bc{{\mathbf{c}}}

\def\cG{{\mathcal{G}}}

\def\cC{{\mathcal{C}}}
\def\cR{{\mathcal{R}}}

\DeclarePairedDelimiter\abs{\lvert}{\rvert}

\algdef{SE}[DOWHILE]{Do}{doWhile}{\algorithmicdo}[1]{\algorithmicwhile\ #1}

\begin{document}
\begin{frontmatter}
\title{On Greedy Algorithms for Binary de Bruijn Sequences}
		
\author[zuling]{Zuling Chang}
\ead{zuling\_chang@zzu.edu.cn}
		
\cortext[cor1]{Corresponding author}
\author[ntu]{Martianus Frederic Ezerman\corref{cor1}}
\ead{fredezerman@ntu.edu.sg}
		
\author[ntu]{Adamas Aqsa Fahreza}
\ead{adamas@ntu.edu.sg}
		
\address[zuling]{School of Mathematics and Statistics, Zhengzhou University, 450001 Zhengzhou, China.}
		
\address[ntu]{School of Physical and Mathematical Sciences, Nanyang Technological University,\\
21 Nanyang Link, Singapore 637371.}
		
\begin{abstract}
We propose a general greedy algorithm for binary de Bruijn sequences, called {\tt Generalized Prefer-Opposite} (GPO) Algorithm, and its modifications. By identifying specific feedback functions and initial states, we demonstrate that most previously-known greedy algorithms that generate binary de Bruijn sequences are particular cases of our new algorithm.
\end{abstract}
		
\begin{keyword}
Binary periodic sequence \sep de Bruijn sequence \sep greedy algorithm \sep feedback function \sep state graph.
%\PACS{PACS code1 \and PACS code2 \and more}
%\subclass{11B50 \and 94A55 \and 94A60}
%\MSC[2010] 00-01\sep  99-00
\end{keyword}
		
\end{frontmatter}
	
%\linenumbers

\section{Introduction}\label{sec:intro}

There are $2^{2^{n-1}-n}$ binary {\it de Bruijn sequence} of order $n$~\cite{Bruijn46}. Each has period $N=2^n$ in which every binary $n$-tuple occurs exactly once. 

Relatively fast generation of a few de Bruijn sequences has a long history in the literature. Any primitive polynomial of degree $n$ generates a maximal length sequence (also known as $m$-sequence) that can be easily modified to a de Bruijn sequence~\cite{Golomb}. There are at least three different algorithms to generate a Ford sequence, which is the lexicograpically smallest de Bruijn sequence. They were treated, respectively, by Fredricksen and Maiorana in~\cite{FM78}, by Fredricksen in~\cite{Fred82}, and by Ralston in~\cite{Ral82}. There are frameworks for construction via {\it successor rules}. More recent, necklace-based, successor rules and their co-necklace variants can be found in several works, \eg, \cite{SWW16}, \cite{Sawada16}, \cite{SWW17}, and \cite{Dra18}.

Many, often less efficient, methods that produce much larger numbers of de Bruijn sequences are known. The majority of them work by joining smaller cycles into de Bruijn sequences. Prominent examples include methods to join cycles generated by pure cycling register or pure summing registers given by Fredricksen in~\cite{Fred75}, Etzion and Lempel in~\cite{EL84}, and Huang in~\cite{Huang90}. Examples of recent works on more general component cycles are the work of Li \etal~\cite{Li16} and Chang \etal~\cite{Chang2019}. 

Here we focus on greedy algorithms to generate de Bruijn sequences. There have also been many known ones, \eg, {\tt Prefer-One} and {\tt Prefer-Same} in~\cite{Fred82} and {\tt Prefer-Opposite} in~\cite{Alh10}. They have since been generalized using the notion of {\it preference functions} in~\cite{Alh12}. A paper of Wang \etal~\cite{WWZ18}, which was presented at SETA 2018, discusses a greedy algorithm based on the feedback function $x_{n-2}+x_{n-1}$ for $n \geq 3$. 

In general one can come up with numerous feedback functions to generate de Bruijn sequences using greedy algorithms. It had unfortunately been rather challenging to confirm which ones of these functions actually work. We show how to circumvent this by specifying feedback functions that come with a \emph{certificate of correctness}.

We state the {\tt Generalized Prefer-Opposite} (GPO) Algorithm and prove sufficient conditions on the feedback functions to ensure that the algorithm indeed generates de Bruijn sequences. This leads us to numerous classes of special feedback functions that can be used to generate de Bruijn sequences via the algorithm. As a corollary, we show that {\tt Prefer-One}, {\tt Prefer-Zero}, as well as others based on preference functions are special cases of the GPO algorithm.

To include even more classes of feedback functions, we put forward suitable modifications of the GPO Algorithm. Several new families of de Bruijn sequences can then be generated in a greedy manner.

After this introduction comes preliminary notions and results in Section~\ref{sec:prelim}. We introduce the GPO Algorithm in Section~\ref{sec:GPO}. Sufficient conditions for the algorithm to produce de Bruijn sequences are then proved. The three subsections describe three respective families of de Bruijn sequences, with some analysis on their properties. Section~\ref{sec:modify} shows how to modify the GPO Algorithm when the sufficient condition is not met. The modification results in numerous instances of successful construction of de Bruijn sequences. Three more families of such sequences are then showcased. We end with a conclusion that, for any de Bruijn sequence $\bS$ of order $n > 2$ and any $n$-string initial state $\bb$, one can always find a feedback function $f$ that the GPO Algorithm can take as an input to produce $\bS$.

\section{Preliminaries}\label{sec:prelim}

Let $0 < k < \ell $ be integers. We denote $\{0,1,2,\ldots,\ell\}$ by $\bbra{\ell}$ and $\{k,k+1,\ldots,\ell\}$ by $\bbra{k,\ell}$. An {\it $n$-stage shift register} is a clock-regulated circuit with $n$ consecutive storage units. Each of the units holds a bit and, as the clock pulses, the bit is shifted to the next stage in line. The output is a new bit $s_n$ based on the $n$ bits $\bs_0= s_0,\ldots,s_{n-1}$ called the
{\it initial state}. The corresponding {\it feedback function} $f(x_0,\ldots,x_{n-1})$ is the Boolean function that, on input $\bs_0$, outputs $s_n$.

With a function $f(x_0,x_1,\ldots,x_{n-1})$, a feedback shift register (FSR) outputs a sequence $\bs=s_0,s_1,\ldots,s_n,\ldots$ satisfying $s_{n+\ell} = f(s_{\ell},s_{\ell+1},\ldots,s_{\ell+n-1})$ for
$\ell = 0,1,2,\ldots$. Let $N$ be the smallest positive integer satisfying $s_{i+N}=s_i$ for all $i \geq 0$. Then $\bs$ is {\it $N$-periodic}
or {\it with period $N$} and one writes $\bs= (s_0,s_1,s_2,\ldots,s_{N-1})$. We call $\bs_i= s_i,s_{i+1},\ldots,s_{i+n-1}$ {\it the $i$-th state} of $\bs$ and states $\bs_{i-1}$ and $\bs_{i+1}$ the {\it predecessor} and {\it successor} of $\bs_i$, respectively.

An $n$-string $c_0,c_1,\ldots,c_{n-1}$ is often written concisely as $c_0c_1 \ldots c_{n-1}$, especially in algorithms and tables. The {\it complement} $\overline{c}$ of $c \in \F_2$ is $1+c$. The complement of a string is the corresponding string produced by taking the complement of each element. The string of zeroes of length $\ell$ is denoted by $\0^{\ell}$. Analogously, $\1^{\ell}$ denotes the string of ones of length $\ell$. Given a state $\bc= c_0,c_1,\ldots,c_{n-1}$, we say that $\overline{\bc}:= \overline{c_0},c_1,\ldots,c_{n-1}$ and $\widehat{\bc}:= c_0,c_1,\ldots,\overline{c_{n-1}}$ are, respectively, the {\it conjugate state} and the {\it companion state} of $\bc$.

A function $f(x_0,x_1,\ldots,x_{n-1})=x_0+g(x_1,\ldots,x_{n-1})$, where $g$ is a Boolean function, is said to be {\it non-singular}. An FSR with non-singular feedback function will generate periodic sequences \cite[p.~116]{Golomb}. Otherwise $f$ is said to be {\it singular} and some states would have two distinct preimages under $f$.

The {\it state graph} of the FSR with feedback function $f(x_0,x_1,\ldots,x_{n-1})$ is a directed graph $\cG_f$ whose vertices are all of the $n$-stage states. There is a directed edge from a state $\bu:=u_0,u_1,\ldots,u_{n-1}$ to a state $\bv:=u_1,u_2,\ldots,f(u_0,u_1,\ldots,u_{n-1})$. We call $\bu$ a {\it child} of $\bv$ and $\bv$ the {\it parent} of $\bu$. We allow $\bu = \bv$, in which case $\cG_f$ contains a loop. A {\it leaf} in $\cG_f$ is a vertex with no child, \ie, a leaf has outdegree $1$ and indegree $0$. We say that a vertex $\bw$ is a {\it descendant} of $\bu$ is there is a directed path that starts at $\bw$ and ends at $\bu$, which in turn is called an {\it ancestor} of $\bw$. A {\it rooted tree} $T_{f,\bb}$ in $\cG_f$ is the largest tree in $\cG_f$ in which one vertex $\bb$ has been designated the {\it root} with the edge that emanates out of $\bb$ removed from $\cG_f$. In this work the orientation is \underline{ towards} the root $\bb$, \ie, $T_{f,\bb}$ is an {\it in-tree} or an {\it arborescence converging to a root} as defined in~\cite[Chapter~6]{Tutte}.

\section{Generalized Prefer-Opposite Algorithm}\label{sec:GPO}

This section describes the GPO algorithm and proves sufficient conditions on the feedback function to guarantee that the algorithm generates de Bruijn sequences. 

For a given feedback function $f(x_0,x_1,\ldots,x_{n-1})$ and initial state $\bb=b_0,b_1,\ldots,b_{n-1}$, the GPO algorithm is presented here as Algorithm~\ref{algo:po}. Notice that it does not always produce de Bruijn sequences. The following Theorem~\ref{thm:po} provides sufficient conditions on the feedback function and the initial state for the GPO algorithm to generate de Bruijn sequence(s) of order $n$.

\begin{algorithm}[ht!]
\caption{{\tt Generalized Prefer-Opposite}}
\label{algo:po}
\begin{algorithmic}[1]
	\renewcommand{\algorithmicrequire}{\textbf{Input:}}
	\renewcommand{\algorithmicensure}{\textbf{Output:}}
	\Require A feedback function $f(x_0,x_1,\ldots,x_{n-1})$ and an initial state $\bb$.
	\Ensure A binary sequence.
	\State{$\bc=c_0, c_1, \ldots ,c_{n-1} \gets \bb$}	
	\Do
	\State{Print($c_0$)}
	\State{$y \gets f(c_0,c_1,\ldots,c_{n-1})$}
	\If{$c_1, c_2, \ldots, c_{n-1}, \overline{y}$ has not appeared before}
		\State{$\bc \gets c_1, c_2, \ldots, c_{n-1},  \overline{y} $} \label{Algpo:line5}
		\Else
		\State{$\bc \gets c_1, c_2, \ldots, c_{n-1}, y$} \label{Algpo:line7}
	\EndIf
	\doWhile{$\bc \neq \bb$}\label{Algpo:line9}
\end{algorithmic}
\end{algorithm}

\begin{theorem}\label{thm:po}
The GPO Algorithm, on input a function $f(x_0,x_1,\ldots,x_{n-1})$ and an initial state $\bb=b_0,b_1,\ldots,b_{n-1}$, generates a binary de Bruijn sequence of order $n$ if the state graph $\cG_f$ of the FSR satisfies the following two conditions.
\begin{enumerate}
	\item All of the states, except for the leaves, have exactly two children.
    \item There is a unique directed path from any state $\bv$ to $\bb$.
\end{enumerate}
\end{theorem}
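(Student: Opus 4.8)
The plan is to argue that under the two stated conditions the GPO Algorithm traverses exactly once each of the $2^n$ states of $\cG_f$ and returns to $\bb$, so that the printed bits form a cyclic sequence of period $2^n$ in which every $n$-tuple appears exactly once. The key structural observation is that condition~(1) says $\cG_f$ is, up to the loops/cycles it contains, built from states of indegree at most $2$, and condition~(2) says that from \emph{every} vertex there is a directed path to $\bb$; since every vertex has outdegree $1$, following successors from any state must eventually reach $\bb$, and $\bb$'s own successor must lead back to $\bb$. Hence the vertex set of $\cG_f$ together with the parent-map forms an in-tree $T_{f,\bb}$ converging to $\bb$ (this is where the preliminaries on $T_{f,\bb}$ get used): restricted to the non-root vertices, the ``parent'' edge of $\cG_f$ is acyclic and every vertex reaches $\bb$, so it is a spanning arborescence on all $2^n$ states.

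First I would set up the bookkeeping: let $\bc^{(0)}=\bb, \bc^{(1)},\bc^{(2)},\ldots$ be the sequence of states the algorithm holds, and let $\bc^{(t+1)}$ be obtained from $\bc^{(t)}=c_0,\ldots,c_{n-1}$ by shifting in either $\overline{y}$ or $y$, where $y=f(\bc^{(t)})$. Note $\bc^{(t+1)}$ is always one of the two children candidates $c_1,\ldots,c_{n-1},0$ and $c_1,\ldots,c_{n-1},1$ of the successor $S(\bc^{(t)}):=c_1,\ldots,c_{n-1},y$ in $\cG_f$, and by condition~(1) these are exactly the (at most two) vertices whose $\cG_f$-parent is $S(\bc^{(t)})$. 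The algorithm prefers the conjugate-of-the-natural-successor (the ``opposite'' child) unless that state has already been printed, in which case it takes the natural successor $S(\bc^{(t)})$ itself. I would then prove by induction on $t$ the invariant: the set of states visited so far, viewed inside $\cG_f$, is exactly the set of descendants-via-the-tree of $\bc^{(t)}$ that lie on the already-traversed portion, and more usefully that the algorithm is performing a depth-first-style traversal of the in-tree $T_{f,\bb}$: whenever it is at a state $\bv$ it next moves to an unvisited child of $\bv$ (the opposite one) if one exists, and otherwise moves to the parent $S(\bv)$. The crucial point, which needs condition~(1), is that the ``parent'' step $S(\bv)$ is well-defined as the tree-parent, and that a state $\bv$ has at most the two children $\widehat{S(\bv)}$-type candidates, so ``the opposite child is already visited'' really does certify ``all children of $\bv$ are visited''.

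From the DFS interpretation the conclusion is standard: a depth-first walk on a finite in-tree that starts and is anchored at the root, always descending to a fresh child when possible and backtracking along the unique parent edge otherwise, visits every vertex exactly once and returns to the root; this uses condition~(2) to know the tree is spanning (all $2^n$ states are descendants of $\bb$) so the walk cannot terminate early, and uses that each state has $\leq 2$ children so ``no fresh child'' is detectable by the single lookup in the algorithm. Therefore exactly $2^n$ bits are printed before $\bc=\bb$ recurs, and since consecutive printed states are consecutive states of the output sequence and all $2^n$ of them are distinct, the output is a de Bruijn sequence of order $n$. The main obstacle I anticipate is the careful verification of the DFS invariant at the backtracking step: one must show that when the opposite child of the current state has already appeared, the state the algorithm then moves to, namely $S(\bc^{(t)})$, is precisely the tree-parent of $\bc^{(t)}$ and that all genuine forward options from $\bc^{(t)}$ are exhausted — equivalently that the algorithm never ``backtracks too early'' and never gets stuck at a non-root leaf. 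This is exactly where conditions~(1) and~(2) are both needed, and making that argument airtight (e.g.\ handling the loop at $\0^n$ or $\1^n$ and the edge-removal convention defining $T_{f,\bb}$) is the delicate part.
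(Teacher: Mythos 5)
There is a genuine structural error at the heart of your argument. You assert that the two candidate next-states $c_1,\ldots,c_{n-1},\overline{y}$ and $c_1,\ldots,c_{n-1},y$ are ``the two children candidates\dots of the successor $S(\bc^{(t)})$,'' i.e.\ the two vertices whose $\cG_f$-parent is $S(\bc^{(t)})$. This is false. In $\cG_f$ the children of a state $\bv=v_0,\ldots,v_{n-1}$ are obtained by \emph{prepending} a bit (they are among $0,v_0,\ldots,v_{n-2}$ and $1,v_0,\ldots,v_{n-2}$), whereas your two candidates differ from each other only in the \emph{last} bit: one of them is $S(\bc^{(t)})$ itself, the parent of $\bc^{(t)}$ in the in-tree, and the other is its companion $\widehat{S(\bc^{(t)})}$, which under condition (1) is necessarily a leaf (both of its potential children already feed into $S(\bc^{(t)})$) and may sit far from $\bc^{(t)}$ in the tree. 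Consequently the algorithm never ``descends to an unvisited child of $\bv$''; at every step it either climbs one edge toward the root or jumps to a leaf elsewhere. The DFS invariant you propose is therefore false as stated, and the ``standard'' conclusion you invoke does not apply. Indeed, a depth-first walk on a tree traverses each edge twice and revisits every internal vertex while backtracking, whereas the GPO Algorithm must visit each of the $2^n$ states exactly once before returning to $\bb$ --- so the traversal cannot be a DFS of $T_{f,\bb}$ in your sense.

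The correct consequence of condition (1), which your proposal misses, is the one isolated in Lemma~\ref{lemma1}: if $\bs$ has two children, those children are exactly the two de Bruijn-graph predecessors of $\bs$, the companion $\widehat{\bs}$ is a leaf reachable only from those same two children, and the preference rule forces both children to have been visited before the algorithm can enter $\bs$. The paper combines this with two further steps: first, the first state to be revisited must be $\bb$ (a revisited state must be entered along the non-preferred branch, hence is a non-leaf, hence by the lemma one of its children was revisited even earlier, a contradiction); second, by condition (2) every state is a descendant of $\bb$ in $T_{f,\bb}$, so repeated application of the lemma starting from the second visit to $\bb$ shows all $2^n$ states were covered. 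Your opening observation --- that condition (2) makes $\cG_f$ minus the edge out of $\bb$ a spanning in-tree rooted at $\bb$ --- is correct and is where that condition enters, but the traversal analysis must be rebuilt around the predecessor-based lemma rather than a descend-and-backtrack invariant.
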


Before proving the theorem, we establish an important lemma that will be frequently invoked.
\begin{lemma}\label{lemma1}
Let the state graph $\cG_f$ of the FSR with feedback function $f$ satisfy the first condition in the statement of Theorem~\ref{thm:po}. Let the state $\bs$ be a vertex with two children. Let the GPO Algorithm starts with an initial state $\bb \neq \bs$, with $\bb$ not a leaf. By the time the algorithm visits $\bs$ it must have visited both children of $\bs$.
\end{lemma}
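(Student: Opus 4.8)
The plan is to track how the GPO Algorithm traverses $\cG_f$ and to argue that whenever it arrives at a state $\bs$ with two children, both of those children must already have been printed. First I would set up the correspondence between the algorithm's steps and edges of $\cG_f$: from a current state $\bc = c_0,\ldots,c_{n-1}$, the next state is either the companion-type successor $c_1,\ldots,c_{n-1},\overline{y}$ or the ``natural'' successor $c_1,\ldots,c_{n-1},y$ where $y = f(\bc)$; in either case the new state is a parent of $\bc$ in $\cG_f$, and the two possible successors are precisely the two children (in the reverse sense) sharing the suffix $c_1,\ldots,c_{n-1}$. So the algorithm always moves from a vertex to one of its parents, and the two candidate next-states are the two states whose common child would be $\bc$ if $\bc$ is read backwards — more carefully, $\bc$ and its conjugate $\overline{\bc}$ are the (at most) two children of the state $c_1,\ldots,c_{n-1},y$ or $c_1,\ldots,c_{n-1},\overline{y}$ depending on parity. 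I would isolate this as the key structural observation: the decision in lines~\ref{Algpo:line5}--\ref{Algpo:line7} is exactly ``move to the parent along the conjugate pair if that vertex is new, otherwise move to the other member of the pair.''

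Next I would argue by contradiction. Suppose the algorithm visits $\bs$ for the first time at some step, but has visited only one child of $\bs$, say $\bc^{(1)}$, and not the other, $\bc^{(2)}$ (where $\bc^{(2)} = \overline{\bc^{(1)}}$, the conjugate). Since the walk reached $\bs$, its predecessor state in the walk must be one of the children of $\bs$ — so the walk entered $\bs$ from $\bc^{(1)}$. Consider the step at which the algorithm was at $\bc^{(1)}$ and computed its successor: the two candidates were $\bs$ and the other parent of $\bc^{(1)}$. I need to examine the moment the algorithm chose to go to $\bs$. According to the rule, the algorithm prefers the conjugate-successor when it is fresh; I would do a short case analysis on whether $\bs$ is the conjugate-successor or the ordinary successor of $\bc^{(1)}$, and in each case deduce something about the other parent of $\bc^{(1)}$ having or not having been visited. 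The crucial point to extract is that the conjugate pair $\{\bc^{(1)}, \bc^{(2)}\}$ is ``entered'' as a block: once the walk leaves this suffix-class and moves to the parent $\bs$, it can only re-enter by coming back through $\bc^{(2)}$, but by the time it revisits $\bs$ (if ever) it would have needed to traverse $\bc^{(2)}$ first. Combined with the fact that from $\bb$ (not a leaf, $\bb \neq \bs$) every vertex has a path to $\bb$ but here we are only using the first condition, I would show that reaching $\bs$ from $\bc^{(1)}$ forces, at the decision step out of $\bc^{(2)}$ (which must occur before, since $\bc^{(2)}$'s only parent via the conjugate-pair rule that leads forward is also $\bs$ or the companion), a contradiction with $\bc^{(2)}$ being unvisited.

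The cleanest route is probably induction on the number of steps, maintaining the invariant: \emph{at every moment, the set of visited states is ``downward closed under the walk'' in the sense that if a non-leaf $\bv$ has been visited then so have all states from which the walk has been forced to detour}, but more usefully: \emph{whenever the current state $\bc$ has a parent $\bv$ both of whose children the walk has visited, the walk has in fact reached $\bv$}. I would verify the invariant is preserved across a single step by checking the two branches of the \texttt{if}: in the fresh-conjugate branch we extend the visited set with a new vertex whose sibling may or may not be visited yet; in the else-branch we move to a parent whose conjugate-child was already seen, and I must check the parent's other child was too — this is exactly where the first condition (every non-leaf has exactly two children, i.e.\ the conjugate pair is complete) and the uniqueness-of-path structure feed in. The main obstacle I anticipate is precisely this induction step in the else-branch: ruling out the scenario where the walk is pushed to the ordinary successor $\bs$ of $\bc$ because the conjugate successor $\widehat{\text{(something)}}$ was already used, yet the \emph{other} child of $\bs$ has somehow been skipped. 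Handling that requires carefully tracking that the conjugate successor being ``old'' means the walk passed through it earlier, and that earlier passage is what forces the sibling of $\bs$ to have been visited — a small but delicate bookkeeping argument about the order in which members of a conjugate pair and their common parent appear along the walk.
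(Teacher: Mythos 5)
Your setup is the right one and matches the paper's strategy (contradiction, using the conjugate pair of children and the companion successor), but you stop exactly at the step that carries the whole proof, and you flag it yourself as ``a small but delicate bookkeeping argument'' without supplying it. The missing idea is a single clean observation: the two candidate successors of a child $\bw$ of $\bs$ are $\bs$ and its companion $\widehat{\bs}$, and these two states have the \emph{same} set of possible walk-predecessors, namely the two conjugate children $\bv,\bw$ of $\bs$ (any walk-predecessor of $\bs$ or of $\widehat{\bs}$ must end in the suffix $s_0,\ldots,s_{n-2}$, hence is $\bv$ or $\bw$). So if $\bv$ is unvisited and $\bw$ is only now being left, $\widehat{\bs}$ cannot have been visited earlier; the preference rule would then force $\bw$'s successor to be $\widehat{\bs}$, not $\bs$, a contradiction. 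Once you have this, no induction, no invariant about downward-closed visited sets, and no ``order of appearance along the walk'' bookkeeping is needed.

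Two smaller inaccuracies feed into why you got stuck. First, you call $\widehat{\bs}$ ``the other parent of $\bc^{(1)}$'': in $\cG_f$ every vertex has out-degree $1$, so $\bc^{(1)}$ has a unique parent, which is $\bs$; the other walk-candidate $\widehat{\bs}$ is the companion of that parent, not a second parent. Second, your proposed case analysis on ``whether $\bs$ is the conjugate-successor or the ordinary successor of $\bc^{(1)}$'' is vacuous: $\bc^{(1)}$ being a child of $\bs$ in $\cG_f$ means by definition that $\bs = c_1,\ldots,c_{n-1},f(\bc^{(1)})$, i.e.\ $\bs$ is necessarily the ordinary ($y$-branch) successor, so the algorithm can only have moved $\bc^{(1)} \to \bs$ because $\widehat{\bs}$ had already appeared. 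That forced conclusion is precisely what the predecessor-counting observation above contradicts.
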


\begin{proof}
Suppose that $\bb$ is a leaf. To visit $\bb$ for the second time, one of its possible predecessors, say $\bu=u_0,u_1,\ldots,u_{n-1}$, must have been visited before. At this point, $\bb$ must be of the form $u_1,\ldots,u_{n-1},\overline{f(u_0,u_1,\ldots,u_{n-1})}$. By the rule of the algorithm, since $\bb$ had appeared at the very beginning, the successor of $\bu$ must then be $u_1,\ldots,u_{n-1},f(u_0,u_1,\ldots,u_{n-1})$. This rules $\bb$ out from being visited again. The algorithm will not terminate since Line~\ref{Algpo:line9} in Algorithm~\ref{algo:po} is always satisfied and the iteration continues. Thus, to ensure that the output is a periodic sequence, $\bb$ must not be a leaf.

Consider the state $\bs$ and let $\bv$ and $\bw$ be the two children of $\bs$ in $\cG_f$. Since $\bs$ is not a leaf, one of the children, say $\bw$, must have been its predecessor in the sequence $\bS$ produced by the GPO Algorithm thus far. Suppose, for a contradiction, that the algorithm visits $\bs$ without having visited $\bv$. Consider the other possible successor of both $\bv$ and $\bw$, which is the companion state $\widehat{\bs}$ of $\bs$. Since $\bv$ has not been visited, then $\widehat{\bs}$ must have not been visited as well. The fact that $\bs$ is the actual successor of $\bw$ contradicts the rule of the algorithm since the successor of $\bw$ ought to have been $\widehat{\bs}$.
\end{proof}
%%%%%

\begin{proofof}{Theorem~\ref{thm:po}}
We first show that it is impossible for Algorithm \ref{algo:po} to visit any state twice \emph{before} it reaches the initial state $\bb$ the second time. For a contradiction, suppose that $\bu \neq \bb$ is the first state to be visited twice. The assignment rule precludes visiting the state of the form 
\[
c_1, c_2, \ldots, c_{n-1}, \overline{f(c_0,c_1,\ldots,c_{n-1})}
\]
twice. Hence, $\bu$ must be of the form $c_1, \ldots, c_{n-1}, f(c_0,c_1,\ldots,c_{n-1})$, implying that $\bu$ has as a child the state $c_0, c_1, \ldots, c_{n-1}$ in $\cG_f$ and $\bu$ is not a leaf. By Lemma \ref{lemma1}, when $\bu$ is first visited by the algorithm, its two children must have been visited. The second time $\bu$ is visited, one of its two children must have also been visited twice. This contradicts the assumption that $\bu$ is the first vertex to have been visited twice.

The algorithm continues until it visits $\bb$ again. Any other state, say $\bv \neq \bu$, has two possible successors in $\bS$. At least one of the two must have not been visited yet since their two possible predecessors in $\bS$ include $\bv$.

Now it suffices to show that Algorithm \ref{algo:po} visits all states. Since there is unique directed path to $\bb$ from any other state, all of the other states are descendants of $\bb$. By Lemma \ref{lemma1}, by the time the algorithm revisits $\bb$, it must have visited $\bb$'s two children, even if one of the two is $\bb$ itself. The same lemma implies that the respective child(ren) of these two children of $\bb$ must have been visited beforehand. By repeated applications of the lemma we confirm that all of the descendants of $\bb$ must have been covered in the running of Algorithm \ref{algo:po}.
\end{proofof}

The initial state in {\tt Prefer-One} is $\0^n$. The next bit of the sequence is $1$ if the newly formed $n$-stage state has not previously appeared in the sequence. If otherwise, then the next bit is $0$. For $n=4$, for example, the algorithm outputs $\bS:=(0000~1111~0110~0101)$. {\tt Prefer-Zero}, introduced by Martin in~\cite{Martin1934}, works in a similar manner, interchanging $0$ and $1$ in the rule of {\tt Prefer-One}. Knuth nicknamed {\tt Prefer-Zero} the {\it granddaddy} construction of de Bruijn sequences~\cite{Knuth4A}. For $n=4$, the granddaddy outputs $(1111~0000~1001~1010)$, which is the complement of $\bS$.

\begin{corollary}\label{cor:oz}
Let $n \geq 2$. {\tt Prefer-One} is the special case with $f(x_0,x_1\ldots,x_{n-1})=0$ and $\bb=\0^n$. 
{\tt Prefer-Zero} is the special case with $f(x_0,x_1\ldots,x_{n-1})=1$ and $\bb= \1^n$.
\end{corollary}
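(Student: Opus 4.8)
The plan is to verify Corollary~\ref{cor:oz} by checking that the two feedback functions, together with their stated initial states, satisfy both conditions of Theorem~\ref{thm:po}, and then to confirm that the assignment rule of the GPO Algorithm specializes exactly to the ``prefer one'' (resp.\ ``prefer zero'') rule. I will treat the {\tt Prefer-One} case in full; the {\tt Prefer-Zero} case follows by interchanging the roles of $0$ and $1$ (equivalently, by complementing every state), so I will only remark on it at the end.

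First I would describe the state graph $\cG_f$ for $f(x_0,\ldots,x_{n-1})=0$. The successor of a state $u_0,u_1,\ldots,u_{n-1}$ is $u_1,\ldots,u_{n-1},0$, so every vertex has out-degree $1$ and its parent is obtained by shifting left and appending a $0$. A state $\bv=v_0,\ldots,v_{n-1}$ has a child exactly when $v_{n-1}=0$, in which case its two children are $0,v_0,\ldots,v_{n-2}$ and $1,v_0,\ldots,v_{n-2}$ (these are the conjugate pair mapping to $\bv$); if $v_{n-1}=1$ then $\bv$ is a leaf. Hence every non-leaf has exactly two children, establishing Condition~1. For Condition~2 with $\bb=\0^n$: starting from any state $\bv$ and repeatedly applying the successor map, the $i$-th successor is obtained by shifting $\bv$ left $i$ places and padding with zeros on the right, so after $n$ steps we reach $\0^n$, and since each out-degree is $1$ this directed path is unique. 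Thus both hypotheses of Theorem~\ref{thm:po} hold, and the GPO Algorithm on this input outputs a de Bruijn sequence of order~$n$.

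Next I would check that the GPO rule reduces to the classical one. With $f\equiv 0$ we have $y=f(c_0,\ldots,c_{n-1})=0$ and $\overline{y}=1$ at every step, so Line~\ref{Algpo:line5} tries the successor $c_1,\ldots,c_{n-1},1$ first, and only if that $n$-tuple has already appeared does Line~\ref{Algpo:line7} append $0$ instead. That is precisely the {\tt Prefer-One} rule: append $1$ if the resulting state is new, else append $0$. Since the termination test $\bc\neq\bb$ with $\bb=\0^n$ matches the usual stopping condition (return to the all-zero state), the output sequence coincides with the one produced by {\tt Prefer-One}. For $n=4$ this gives $(0000~1111~0110~0101)$ as in the text.

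For {\tt Prefer-Zero} the argument is the mirror image: with $f\equiv 1$ we get $y=1$, $\overline{y}=0$, so the algorithm prefers appending $0$, which is Martin's rule; the state graph is the one in which parents are formed by shifting left and appending $1$, every non-leaf again has exactly two children, and from any state the unique directed path of length $n$ reaches $\1^n=\bb$, so Theorem~\ref{thm:po} applies. I do not expect a genuine obstacle here --- the only point requiring a little care is the direction of the edges in $\cG_f$ (``child'' means preimage under $f$, so a state has two children iff its last coordinate equals the constant output value of $f$), and once that is pinned down, Conditions~1 and~2 are immediate and the rule-matching is a one-line observation.
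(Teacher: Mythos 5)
Your verification is correct and is exactly the argument the paper leaves implicit: the corollary is stated without proof as an immediate application of Theorem~\ref{thm:po}, and your check that for $f\equiv 0$ (resp.\ $f\equiv 1$) every non-leaf state has two children and every state reaches $\0^n$ (resp.\ $\1^n$) by a unique path, together with the observation that $\overline{y}=1$ (resp.\ $\overline{y}=0$) makes Line~\ref{Algpo:line5} the classical preference rule, is precisely what is needed. No gaps.
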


To generate de Bruijn sequences by Algorithm~\ref{algo:po} it suffices to find families of feedback functions and initial states that satisfy the conditions in Theorem \ref{thm:po}. There are numerous such combinations with only three families of them explicitly discussed in the present work.

\subsection{Family One}

Our next result gives a family of de Bruijn sequences that can be greedily constructed from de Bruijn sequences of lower orders. It provides an alternative proof to \cite[Theorem 2.4]{Alh12} in the binary case.

\begin{theorem}\label{thm:db}
Let $h(x_0,x_1,\ldots,x_{m-1})$ be a feedback function whose FSR generates a de Bruijn sequence $\bS_m$ of order $m$. We fix a positive integer $n > m \geq 2$. The GPO Algorithm generates the family $\mathcal{F}_1(n; h,m)$ of de Bruijn sequences of order $n$ on input any string of length $n$ in $\bS_m$ as the initial state $\bb$ and
\begin{equation}\label{equ-db}
f(x_0,x_1,\ldots,x_{n-1}) := h(x_{n-m},x_{n-m+1},\ldots,x_{n-1})
\end{equation}
\end{theorem}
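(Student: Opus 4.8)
The plan is to verify the two hypotheses of Theorem~\ref{thm:po} for the function $f$ of \eqref{equ-db} and for $\bb$ equal to an arbitrary length-$n$ window of $\bS_m$; Theorem~\ref{thm:po} then delivers the conclusion, and letting $\bb$ range over all such windows produces the family $\mathcal{F}_1(n;h,m)$. For notation I would write $\bS_m=(a_0,a_1,\ldots,a_{2^m-1})$ cyclically, so that $a_k=h(a_{k-m},\ldots,a_{k-1})$ for every $k$, and for $0\le j<2^m$ let $W_j:=a_j,a_{j+1},\ldots,a_{j+n-1}$ be the length-$n$ window of $\bS_m$ starting at position $j$. Since $\bS_m$ is de Bruijn of order $m$, its $m$-bit windows are pairwise distinct, hence so are the $W_j$.

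For the first condition the key point is that, because $n>m$, the value $f(u_0,u_1,\ldots,u_{n-1})=h(u_{n-m},\ldots,u_{n-1})$ does not involve $u_0$. Fixing a state $\bv=v_0,\ldots,v_{n-1}$, a child $\bu=u_0,\ldots,u_{n-1}$ of $\bv$ must satisfy $u_i=v_{i-1}$ for $1\le i\le n-1$ together with $f(\bu)=v_{n-1}$; since $n-m\ge1$ the last equation reads $h(v_{n-m-1},\ldots,v_{n-2})=v_{n-1}$, which is independent of the free bit $u_0$. So $\bv$ either fails this equation and has no child (it is a leaf) or satisfies it and has exactly the two children obtained from $u_0\in\{0,1\}$. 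That is the first condition.

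For the second condition I would use that each state of $\cG_f$ has a single out-edge, namely to its parent $P(\bv):=v_1,\ldots,v_{n-1},f(\bv)$, so the directed path leaving $\bv$ is unique and it suffices to show it reaches $\bb$. The heart of the matter is the claim that $P^{n-m}(\bv)$ is always a window $W_j$ of $\bS_m$. I would prove it by extending $\bv=v_0,\ldots,v_{n-1}$ to a one-sided sequence $w_0,w_1,\ldots$ via the order-$m$ recurrence of $h$, checking that $P^{t}(\bv)=w_t,\ldots,w_{t+n-1}$ for all $t$, then locating the last $m$ bits $v_{n-m},\ldots,v_{n-1}$ of $\bv$ inside $\bS_m$ (possible as $\bS_m$ is de Bruijn of order $m$) and noting that from there on the $w$-sequence must coincide with $\bS_m$, since both obey the same recurrence with feedback $h$. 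A short computation using $a_k=h(a_{k-m},\ldots,a_{k-1})$ then gives $P(W_j)=W_{j+1}$, so the $2^m$ distinct windows $W_0,\ldots,W_{2^m-1}$ form a single directed cycle in $\cG_f$, on which $\bb$ lies. Hence from any state the $P$-orbit reaches this cycle in $n-m$ steps and then travels around to $\bb$, giving the second condition. (In particular, when $\bb=W_i$ it has the child $W_{i-1}$, so it is not a leaf and the stalling phenomenon at the start of the proof of Lemma~\ref{lemma1} does not occur.)

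I expect the only delicate part to be the index bookkeeping in the ``$n-m$ steps'' claim: one must check carefully that the last $m$ coordinates of the running state evolve exactly as the states of the order-$m$ register with feedback $h$, so that once they match some window of $\bS_m$ every later coordinate does too, and keep the two nested inductions straight (on $t$ for the shape of $P^{t}(\bv)$, and on $k$ for the agreement $w_{n-m+k}=a_{j+k}$). The first condition, the distinctness of the $W_j$, and the non-leaf property of $\bb$ are all routine.
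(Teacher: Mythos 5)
Your proposal is correct and follows essentially the same route as the paper: both verify the two hypotheses of Theorem~\ref{thm:po}, noting that $f$ is independent of $x_0$ (so non-leaf states have exactly two children) and that the length-$n$ windows of $\bS_m$ form a directed cycle in $\cG_f$ onto which every other state funnels. You merely supply the details (the $P^{n-m}(\bv)$ computation and the identity $P(W_j)=W_{j+1}$) that the paper dismisses as ``immediate to confirm.''
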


\begin{proof}
It suffices to prove that $\cG_f$ with $f$ in Equation (\ref{equ-db}) satisfies the two conditions in Theorem \ref{thm:po}. Since the coefficient of $x_0$ in $f$ is $0$, any non-leaf state has two children. It is immediate to confirm that $\cG_f$ contains a directed cycle whose vertices are the $n$-stage states of $\bS_m$. All other states are vertices in the trees whose respective roots are the states of the above cycle. Since there is a unique directed path from any state to some state $\bu$ in the cycle and there is a unique directed path from $\bu$ to $\bb$, the second condition is satisfied.
\end{proof}

\begin{example}
$\bS_4:=(0000~1101~0010~1111)$ is de Bruijn with
\begin{equation}\label{eq:h}
h(x_0,x_1,x_2,x_3) := 1+ x_0 + x_2 \cdot x_3 + x_1 \cdot x_2 \cdot x_3.
\end{equation}
Table~\ref{table:EDB} provides the $16$ distinct de Bruijn sequences in the $\mathcal{F}_1(6; h, 4)$ family. \QEDB

\begin{table}[h!]
\caption{All $16$ de Bruijn Sequences in $\mathcal{F}_1(6;h,4)$ with $h(x_0,x_1,x_2,x_3)$ in Equation~\ref{eq:h}.}
\label{table:EDB}
\renewcommand{\arraystretch}{1.1}
\centering
\small
\begin{tabular}{c c}
\hline
$\bb$ & The Resulting de Bruijn Sequence \\
\hline
$000011$ & $(00000010 ~ 10010111 ~ 10000111 ~ 0110~0111 ~ 1110 1010 ~ 1101 1100 ~ 1000 1001 ~ 1000 1101)$ \\

$000110$ & $(0000 0010 ~ 1001 0111 ~ 1000 0110 ~ 0111 0110 ~ 0010 0011 ~ 1111 0101 ~ 0110 1110 ~ 0100 1101)$ \\

$001101$ & $(00000010 ~ 10010111 ~ 1000 0110 ~ 1100 1110 ~ 1110 0100 ~ 0100 1100 ~ 0111 1110 1010 ~ 1101)$ \\

$011010$ & $(00000010 ~ 1001 0111 ~ 1000 0110 ~ 1010 1100 ~ 1110 1101 ~ 1100 1000 ~ 1001 1000 ~ 1111 1101)$ \\

$110100$ & $(00000010 ~ 1010 0111 ~ 0110 0110 ~ 0011 1111 ~ 0101 1011 ~ 1001 0010 ~ 1111 0000 ~ 1101 0001)$ \\

$101001$ & $(00000010 ~ 11110000 ~ 11010011 ~ 10110011 ~ 00010001 ~ 11111010 ~ 10110111 ~ 00100101)$ \\

$010010$ & $(00000010 ~ 11110000 ~ 11010010 ~ 00100111 ~ 01100110 ~ 00111111 ~ 01010110 ~ 11100101)$ \\

$100101$ & $(00000010 ~ 11001110 ~ 11011100 ~ 10011000 ~ 11111101 ~ 01111000 ~ 01101001 ~ 01010001)$ \\

$001011$ & $(00000010 ~ 10000110 ~ 10010110 ~ 01110110 ~ 11100100 ~ 01001100 ~ 01111110 ~ 10101111)$ \\

$010111$ & $(00000010 ~ 10000110 ~ 10010111 ~ 01100111 ~ 00100010 ~ 01100011 ~ 11110101 ~ 01101111)$ \\

$101111$ & $(00000010 ~ 10000110 ~ 10010111 ~ 11101100 ~ 11101010 ~ 11011100 ~ 10001001 ~ 10001111)$ \\

$011110$ & $(00000010 ~ 10000110 ~ 10010111 ~ 10110011 ~ 10101011 ~ 01110010 ~ 00100110 ~ 00111111)$ \\

$111100$ & $(00000010 ~ 10000110 ~ 10010111 ~ 10011101 ~ 10010001 ~ 00110001 ~ 11111010 ~ 10110111)$ \\

$111000$ & $(00000010 ~ 10000110 ~ 10010111 ~ 10001000 ~ 11101100 ~ 11111101 ~ 01011011 ~ 10010011)$ \\

$110000$ & $(00000010 ~ 00101010 ~ 00011101 ~ 10011111 ~ 10101101 ~ 11001001 ~ 10001101 ~ 00101111)$ \\

$100001$ & $(00000011 ~ 10110011 ~ 11110101 ~ 10111001 ~ 00110001 ~ 10100101 ~ 11100001 ~ 00010101)$ \\
\hline
\end{tabular}
\end{table}
\end{example}

When $n>m+1$, distinct initial states yield distinct de Bruijn sequences of order $n$, yielding $2^m$ distinct de Bruijn sequences for a valid $f$. But if $n=m+1$, then there are collisions in the output. Two pairs of initial states, namely $(\1^m 0,0 \1^m)$ and $(\0^m 1,1 \0^m)$, generate the same sequence. We formalize the observation in Theorem~\ref{thm:distinct} and supply its proof in the appendix. 

\begin{theorem}\label{thm:distinct}
The number of distinct de Bruijn sequences in $\mathcal{F}_1(n;h,m)$ with $n > m \geq 2$ is
\begin{equation}\label{eq:sizeF1}
\abs{\mathcal{F}_1(n;h,m)} =
\begin{cases}
2^m-2 & \mbox{ if }  n = m + 1, \\
2^m  & \mbox{ if }  n > m+1.
\end{cases}
\end{equation}
\end{theorem}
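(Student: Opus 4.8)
The plan is to analyze the map from initial states $\bb$ (the $n$-strings occurring in $\bS_m$) to output sequences in $\mathcal{F}_1(n;h,m)$, and count its image. The starting point is the structure of $\cG_f$ exposed in the proof of Theorem~\ref{thm:db}: the $n$-stage states appearing in $\bS_m$ form a single directed cycle $C$ of length $2^m$, and every other state hangs off $C$ in an in-tree. For the GPO Algorithm with initial state $\bb \in C$, by Theorem~\ref{thm:po} the output is a de Bruijn sequence of order $n$; the question is exactly when two roots $\bb, \bb' \in C$ yield the \emph{same} cyclic sequence.

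First I would settle the case $n > m+1$. Here I claim the $2^m$ states of $C$ are pairwise inequivalent as roots, so $\abs{\mathcal{F}_1} = 2^m$. The key observation is that the output sequence $\bS$, read as an $N=2^n$-periodic cyclic word, \emph{contains} the $n$-string $\bb$, and the position of $\bb$ within $\bS$ is determined: $\bb$ is the unique state that is printed last before the algorithm halts (equivalently, $\bb$ is the state at which the de Bruijn cycle ``closes''). More usefully, one recovers $\bb$ from $\bS$ intrinsically: I would show that a state $\bv$ of $\bS$ is the root $\bb$ iff both of its children in $\cG_f$ are descendants-in-$T_{f,\bb}$ whose unique paths to $\bb$ pass through $\bv$ — i.e.\ $\bb$ is characterized by the algorithm's halting condition. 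Since distinct roots give sequences distinguishable by \emph{which} $n$-string plays this role, and since for $n > m+1$ an $m$-string of $\bS_m$ sits inside an $n$-string in only one way (no wraparound coincidences because $n-m \geq 2$ forces the $m$-window to be an honest substring), distinct $\bb \in C$ give distinct cyclic outputs. I would make the ``intrinsic recovery of $\bb$'' rigorous by tracking, as in Lemma~\ref{lemma1} and the proof of Theorem~\ref{thm:po}, that the last edge traversed is the unique edge of $\cG_f$ out of $\bb$, and that this edge is detectable from $\bS$ alone.

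Next, the case $n = m+1$, where I must show exactly two collisions, bringing the count down to $2^m - 2$. Here the $n$-strings in $\bS_m$ are the $2^{m+1}$ pairs of consecutive $m$-windows, but $\bS_m$ has period $2^m$, so the cycle $C$ is traversed twice as we range over starting positions; concretely the map ``position in $\bS_m$ $\mapsto$ $n$-string'' is the standard double cover, and the special strings are $\0^m 1$ vs.\ $1\0^m$ and $\1^m 0$ vs.\ $0\1^m$ — the two ``antipodal'' pairs sitting at the unique $0$-run and $1$-run of $\bS_m$. I would verify directly that GPO started at $\0^m1$ and at $1\0^m$ produces the same cyclic de Bruijn sequence: both lie on $C$, and the in-trees rooted at them are cyclic shifts of each other under the symmetry of $\cG_f$ coming from the period-$2^m$ structure, so the two runs of the algorithm traverse the same edges up to cyclic rotation, hence print the same cyclic word. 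The analogous argument handles $\1^m 0$ and $0\1^m$. Finally I would check that no \emph{other} pair collides: any collision forces the two roots to induce the same partition of $\cG_f$ into an in-tree plus a closing edge, and a short case analysis — again using that the only repeated-window ambiguity in a period-$2^m$ de Bruijn sequence of order $m$ occurs at the all-zero and all-one runs — shows these are the only two.

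The main obstacle will be the second, sharper half of the $n=m+1$ case: proving there are \emph{no further} collisions. Exhibiting the two known collisions is a direct (if slightly fiddly) computation, and the $n>m+1$ inequivalence follows cleanly from the intrinsic-recovery idea; but ruling out \emph{all} other coincidences requires understanding precisely how the greedy choices in GPO depend on the root $\bb$ when $\cG_f$ has the double-cover degeneracy, and showing that two roots giving the same output must be related by the $\cG_f$-symmetry. I expect to handle this by showing that the output sequence determines the pair $\{$in-tree, closing edge$\}$, that this pair determines $\bb$ up to the symmetry group of $\cG_f$ fixing $\bS_m$'s structure, and that this group is trivial except for the swap identifying the two antipodal pairs — which is exactly the content of the two exceptional collisions. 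Since this is routine verification modulo the structural claims, I would relegate the bookkeeping to the appendix, as the paper already signals.
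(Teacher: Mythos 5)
Your high-level plan (cycle $C$ plus in-trees, count the collisions among the $2^m$ roots) matches the paper's, and you correctly name the two colliding pairs for $n=m+1$. But both halves of your argument rest on claims that are not established, and in each case the missing step is exactly where the paper does its real work. For $n>m+1$ you want to recover $\bb$ intrinsically from the cyclic output $\bS$, but the characterization you offer (``$\bv$ is the root iff both of its children are descendants in $T_{f,\bb}$ whose paths to $\bb$ pass through $\bv$'') is circular: it presupposes $\bb$. The entire difficulty of the theorem is that two roots might yield shift-equivalent outputs, so asserting recoverability of the root from the cyclic word begs the question. The paper's proof instead supposes $\bS_n^i=\bS_n^j$, splits the run into two self-closed parts, and counts: Part $A$ must contain the $j-i$ trees $T_\ell$, $\ell\in\bbra{i+1,j}$, but self-closedness forces it to also absorb the trees $T_{\widehat{\bm_{i+2}}},\ldots,T_{\widehat{\bm_{j+n-m}}}$ that receive leaf-successors from conjugate pairs in $\cup_\ell V_\ell$, for a total of $j-i+(n-m-1)$ trees against only $j-i$ available --- a contradiction precisely when $n>m+1$. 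Nothing in your sketch supplies an equivalent mechanism.

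For $n=m+1$ there are two further problems. First, the ``double cover'' claim is false: $\bS_m$ has period $2^m$, hence exactly $2^m$ windows of length $m+1$, one per position, and the cycle $C$ is traversed once, not twice. Second, the symmetry of $\cG_f$ that you invoke to identify the runs from $\0^m1$ and $1\0^m$ is never exhibited, and no graph automorphism is needed or used in the paper; the paper's argument is that the run from $1\0^m$ begins $1\0^m\to\0^{m+1}\to\0^m1\to\cdots$ while the run from $\0^m1$ ends $\cdots\to1\0^m\to\0^{m+1}$, and the single-tree self-closedness analysis guarantees the remaining portions agree, giving shift equivalence. Most importantly, your exclusion of all other collisions is deferred to ``a short case analysis'' without the key computations: the paper shows that a one-tree Part $A$ forces the root to satisfy $u_0\neq u_1=\cdots=u_{n-1}\neq u_n$ (which pins down exactly the two exceptional pairs), and that a Part $A$ spanning two or more trees forces $\bm_i=\bm_{i+2}$, impossible since $\bS_m$ is de Bruijn. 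These derivations are the substance of the theorem, not routine bookkeeping, so as written the proposal has genuine gaps in both cases.
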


\subsection{Family Two}
Here is another family of de Bruijn sequences that can be greedily constructed.
\begin{theorem}\label{thm:prod}
Given $n$ and an integer $0<t<n$, let the feedback function be
\begin{equation}\label{equ-prod}
f(x_0,x_1,\ldots,x_{n-1}) :=
1+\prod_{i=t}^{n-1}x_i
\end{equation}
and the initial state $\bb$ be any $n$-stage state of the sequence
$(0,\1^{n-t})$. Then the GPO Algorithm generates the $\mathcal{F}_2(n)$ family of de Bruijn sequences of order $n$.
\end{theorem}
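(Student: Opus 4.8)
The plan is to verify that the state graph $\cG_f$ of the feedback function $f(x_0,\ldots,x_{n-1}) = 1 + \prod_{i=t}^{n-1} x_i$ satisfies the two conditions of Theorem~\ref{thm:po} whenever the initial state $\bb$ is an $n$-stage state of $(0,\1^{n-t})$. Since the coefficient of $x_0$ in $f$ is $1$, the feedback function is non-singular, so the first condition holds automatically: every non-leaf state has exactly two children (and in fact $\cG_f$ has no leaves at all, being a disjoint union of cycles — but I will not need that). The entire content is therefore the second condition: there is a unique directed path from every state $\bv$ to the designated root $\bb$.

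First I would understand the structure of $\cG_f$ as an FSR graph. Write $f = x_0 + g(x_1,\ldots,x_{n-1})$ with $g(x_1,\ldots,x_{n-1}) = 1 + \prod_{i=t}^{n-1} x_i$. Since $f$ is non-singular, $\cG_f$ decomposes into vertex-disjoint cycles. The successor of a state $\bu = u_0,\ldots,u_{n-1}$ is $u_1,\ldots,u_{n-1},\,u_0 + g(u_1,\ldots,u_{n-1})$. The key observation is that $g(x_1,\ldots,x_{n-1}) = 0$ exactly when $x_t = x_{t+1} = \cdots = x_{n-1} = 1$, i.e.\ the last $n-t$ coordinates are all $1$; otherwise $g = 1$. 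So the FSR behaves like the pure cycling register "most of the time" (feedback $= x_0 + 1$, i.e.\ complemented cycling) except when it sees a long run of $1$'s in the high-order positions. I would next identify the cycle of $\cG_f$ that contains $\bb$ and show it is precisely the cycle corresponding to the periodic sequence $(0,\1^{n-t})$ of period $n-t+1$ — one checks directly that applying the successor rule around the states of $(0,\1^{n-t})$ closes up into a cycle. Actually, since all $n$-stage windows of $(0,\1^{n-t})$ are distinct (a single $0$ among $n-t \geq 1$ ones, and $n > n-t$ forces the window to contain the $0$), this is a genuine cycle of length $n-t+1$ in $\cG_f$, call it $C$.

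The heart of the argument is then: \emph{every} state not on $C$ lies on some other cycle, yet the second condition of Theorem~\ref{thm:po} requires instead that $\cG_f$, after deleting the out-edge of $\bb$, is an in-tree rooted at $\bb$ spanning \emph{all} $2^n$ states — which cannot happen if $\cG_f$ has more than one cycle. So I must be careful: the correct reading is that the GPO algorithm, when it revisits a previously-seen state, \emph{forces} the companion successor, and the combinatorial claim to prove is that the unforced ("prefer-opposite") edges already form a structure routing everything to $\bb$. Concretely I would argue as follows. In $\cG_f$, for each non-leaf state $\bs$ the two children are $\bs$'s two preimages, which are conjugate states of one another; the GPO rule, starting from $\bb$, performs a traversal that is exactly a reverse-DFS/greedy spanning in-tree construction, and Theorem~\ref{thm:po} tells us it succeeds precisely when the "opposite" choices are globally consistent with reaching $\bb$. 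So what I really need is: (i) the cycle $C$ through $\bb$ contains the all-ones-tail states that are the "branch points," and (ii) following the rule "always flip unless that state was already used" from $\bb$ reaches every state exactly once and returns to $\bb$. The cleanest way to get this is to show directly that for $f$ as given, the conditions of Theorem~\ref{thm:po} hold by exhibiting, for an arbitrary state $\bv = v_0,\ldots,v_{n-1}$, the explicit directed path to $\bb$: repeatedly apply $f$ and track how runs of high-order $1$'s evolve, showing the orbit must eventually hit the cycle $C$ and then, within $C$, reach $\bb$; uniqueness of the path then follows because each state has a unique parent (out-degree $1$ in an FSR graph), so "unique directed path from $\bv$ to $\bb$" is equivalent to "$\bb$ is reachable from $\bv$," which is what the path construction gives. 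Wait — that would make condition~2 equivalent to "$\bb$ is reachable from every state," which holds iff $\cG_f$ is a single cycle. So the genuine task reduces to: \textbf{prove that $\cG_f$ for $f = 1 + \prod_{i=t}^{n-1}x_i$ with $\bb \in (0,\1^{n-t})$ is such that every state eventually feeds into the cycle containing $\bb$}, i.e.\ $\cG_f$ has a unique cycle and it is $C$.

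Thus the proof structure I would write is: (1) note non-singularity gives condition~1; (2) show $(0,\1^{n-t})$ is a cycle $C$ of $\cG_f$; (3) the main lemma — \emph{prove $C$ is the only cycle of $\cG_f$}, equivalently that the FSR with feedback $x_0 + 1 + \prod_{i=t}^{n-1}x_i$ has a single cycle, by a run-length/potential argument: outside the "all high-order $1$'s" trigger the map acts as complemented shift, which on its own splits states into cycles; inserting the single exceptional transition merges all of them into one. I would make this precise by tracking the position of the leading block of $1$'s (or a suitable monovariant counting how far a state is from $\0^{t}\1^{n-t}$ under iteration) and showing the trigger state $\0$-free... more carefully, by showing that from any state one reaches $\1^{n}$ or a state with a full high-order run, from which one is pushed onto $C$; (4) conclude condition~2 of Theorem~\ref{thm:po} holds since from every $\bv$ one reaches $C$ and hence $\bb$, and the path is unique because parents are unique in an FSR state graph; (5) invoke Theorem~\ref{thm:po}. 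The main obstacle is step~(3): proving $\cG_f$ has exactly one cycle. I expect this to require a careful case analysis on the run structure of $1$'s in the top $n-t$ positions — essentially showing the perturbation of the complemented PCR by the single product term $\prod_{i=t}^{n-1}x_i$ is "cycle-joining" rather than "cycle-splitting" — and getting the bookkeeping right for all $0 < t < n$ simultaneously is where the real work lies.
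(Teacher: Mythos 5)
Your proposal goes wrong in its very first step, and the error is fatal to the whole plan. You write $f = x_0 + g(x_1,\ldots,x_{n-1})$ and conclude that $f$ is non-singular, so that $\cG_f$ is a disjoint union of cycles. But the feedback function in Equation~(\ref{equ-prod}) is $f = 1 + \prod_{i=t}^{n-1} x_i$ with $t > 0$: it contains no $x_0$ term at all. Hence $f$ is \emph{singular}, and the relevant structural fact is the opposite of the one you use: because the successor bit depends only on $x_1,\ldots,x_{n-1}$, the two states $0,v_1,\ldots,v_{n-1}$ and $1,v_1,\ldots,v_{n-1}$ have the same parent, so every vertex of $\cG_f$ has either two children or none. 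That is exactly condition~1 of Theorem~\ref{thm:po}. Note also that even under your reading the claim that condition~1 ``holds automatically'' would be false: for a non-singular $f$ every state has exactly \emph{one} child, so no non-leaf state has two children and Theorem~\ref{thm:po} could never apply. You half-notice this tension (``Wait --- that would make condition~2 equivalent to \ldots'') but resolve it by setting out to prove that $\cG_f$ is a single cycle, which is a statement about the wrong function and would in any case make the greedy step pointless.

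With the correct, singular $f$, the argument is short and is essentially the one the paper gives: $\cG_f$ is a functional graph whose unique cycle is the set of $n$-stage states of $(0,\1^{n-t})$. Indeed, on any periodic orbit a bit equals $0$ precisely when the $n-t$ bits immediately preceding it are all $1$; this forces one $0$ followed by $n-t$ ones, repeating, and rules out $\1^n$ since $f(\1^n)=0$. Every other state lies in a tree feeding into that cycle, so $\bb$ is reachable from every state, and out-degree $1$ makes the directed path to $\bb$ unique, which is condition~2; Theorem~\ref{thm:po} then finishes the proof. Your run-length/monovariant machinery in step~(3) is aimed at proving a ``unique cycle'' statement for the non-singular perturbation $x_0 + 1 + \prod_{i=t}^{n-1} x_i$, which is not the function in the theorem, and none of it transfers.
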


\begin{proof}
First, since the coefficient of $x_0$ in $f$ as given in Equation (\ref{equ-prod}) is $0$, any non-leaf vertex has two children in $\cG_f$. Second, we confirm that $\cG_f$ contains a cycle whose vertices are all of the $n$-stage states of $(0,\1^{n-t})$. All other states are vertices in the trees whose respective roots are the states of this cycle. Checking that the two conditions in Theorem~\ref{thm:po} are satisfied is in the end rather straightforward.
\end{proof}

\begin{remark} We make two notes on the $\mathcal{F}_2(n)$ family. First, the produced sequence is the same as the output of {\tt Prefer-Same} when $t=n-1$. Our description here simplifies the one given in~\cite{Fred82} on the said algorithm. Second, we exclude $t=0$ since the resulting $\cG_f$ does not satisfy the two conditions in Theorem \ref{thm:po}, although, with the initial state $\1^{n}$, the GPO Algorithm generates the same de Bruijn sequence as the one produced by {\tt Prefer-Zero}.
\end{remark}

\begin{example}
\begin{figure}[h!]
	\centering
	\begin{tikzpicture}
	[
	> = stealth,
	shorten > = 2pt,
	auto,
	node distance = 1.4cm and 1cm,
	semithick
	]
	
	\tikzstyle{every state}=
	\node[rectangle,fill=white,draw,rounded corners,minimum size = 4mm]
	
	\node[state] (1) {$0001$};
	\node[state] (2) [right of=1] {$0000$};
	\node[state] (3) [left of=1] {$1000$};
	\node[state] (4) [below of=1] {$0011$};
	\node[state] (5) [left of=4] {$1001$};
	
	\node[state] (6) [left of=5] {$0100$};
	\node[state] (7) [below of=5] {$1100$};
	
	\node[state,fill=gray] (8) [below of=4] {$0111$};
	\node[state,fill=gray] (9) [right of=8] {$1011$};
	\node[state] (10) [right of=9] {$0101$};
	\node[state] (11) [right of=10] {$1010$};
	\node[state] (12) [above of=10] {$0010$};
	
	\node[state,fill=gray] (13) [below of=8] {$1110$};
	\node[state] (14) [left of=13] {$1111$};
	\node[state,fill=gray] (15) [right of=13] {$1101$};
	\node[state] (16) [right of=15] {$0110$};
	
	\path[->] (2) edge (1);
	\path[->] (3) edge (1);
	
	\path[->] (1) edge (4);
	\path[->] (5) edge (4);
	\path[->] (6) edge (5);
	\path[->] (7) edge (5);
	
	\path[->] (4) edge (8);
	\path[->] (8) edge (13);
	
	\path[->] (11) edge (10);
	\path[->] (12) edge (10);
	
	\path[->] (10) edge (9);
	\path[->] (15) edge (9);
	
	\path[->] (9) edge (8);

	\path[->] (16) edge (15);
	\path[->] (13) edge (15);
	\path[->] (14) edge (13);
	\end{tikzpicture}
	\caption{The state graph $\cG_f$ for $f=x_1 \cdot x_2 \cdot x_3 + 1$
		(\ie, $t=1$).}
	\label{fig:SG1}
\end{figure}
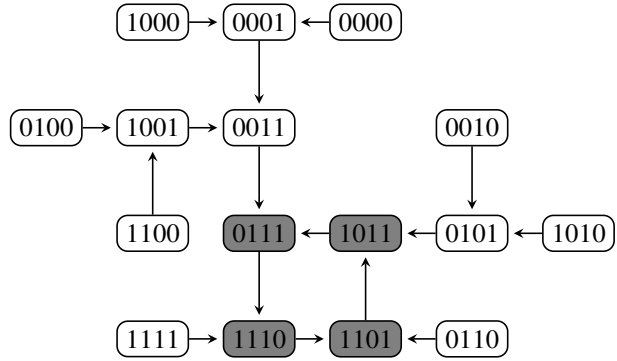
Figure~\ref{fig:SG1} presents $\cG_f$ for $f=x_1 \cdot x_2 \cdot x_3 + 1$. Running Algorithm~\ref{algo:po} on input $f$ and $\bb=1110$ yields the following sequence of states
\begin{align*}
\bb = & 1110 \to 1100 \to 1000 \to 0000 \to \colorbox[gray]{0.8}{0001} \to 0010 
\to ~ 0100 \to \colorbox[gray]{0.8}{1001} \to \colorbox[gray]{0.8}{0011} \\
\to &~0110 \to \colorbox[gray]{0.8}{1101} \to 1010 
\to \colorbox[gray]{0.8}{0101} \to \colorbox[gray]{0.8}{1011} \to
\colorbox[gray]{0.8}{0111} \to 1111 \to \bb.
\end{align*}
The states in gray follow the rule of assignment in Line~\ref{Algpo:line7} of Algorithm~\ref{algo:po} since the respective preferred states designated by Line~\ref{Algpo:line5} have appeared earlier. The resulting de Bruijn sequence is $(1110~0001~0011~0101)$. \QEDB
\end{example}

\subsection{Family Three}
For order $n \geq 4$ we have yet another family of de Bruijn sequences that follows from the GPO Algorithm.

\begin{theorem}\label{thm:sp}
Given $n\geq4$, let the feedback function be
\begin{equation}\label{equ-sp}
f(x_0,x_1,\ldots,x_{n-1})=1 + x_{n-3} + x_{n-1} + x_{n-3} \cdot x_{n-2} 
 + x_{n-2} \cdot x_{n-1} + x_{n-3} \cdot x_{n-2} \cdot x_{n-1}
\end{equation}
and the initial state $\bb$ be any $n$-stage state of the sequence $(1110)$. Then the GPO Algorithm generates the $\mathcal{F}_3(n)$ family of de Bruijn sequences.
\end{theorem}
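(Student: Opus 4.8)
The plan is to follow the template used for Families One and Two: show that the order-$n$ state graph $\cG_f$, with $f$ as in Equation~(\ref{equ-sp}), satisfies the two conditions of Theorem~\ref{thm:po}. Condition~1 is immediate: $f$ depends only on $x_{n-3},x_{n-2},x_{n-1}$, so in particular the coefficient of $x_0$ is $0$ --- which is exactly why $n\geq4$ is imposed, since for $n=3$ the variable $x_{n-3}$ would be $x_0$ and this reasoning would break. Consequently, for any state $v_0,\ldots,v_{n-1}$ the two candidate children $a,v_0,\ldots,v_{n-2}$ with $a\in\F_2$ yield the same value of $f$, so either both are children or neither is, and every non-leaf state has exactly two children.

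The substantive part is Condition~2. Since every vertex of $\cG_f$ has out-degree $1$, $\cG_f$ is a functional graph, and Condition~2 holds if and only if $\cG_f$ has a single cycle and $\bb$ lies on it. Write $g(a,b,c):=1+a+c+ab+bc+abc$, so that $f(x_0,\ldots,x_{n-1})=g(x_{n-3},x_{n-2},x_{n-1})$ and any sequence produced by the FSR obeys $s_j=g(s_{j-3},s_{j-2},s_{j-1})$ for $j\geq n$. A cycle of $\cG_f$ is therefore a binary sequence $\bt$ of some exact period $L$ satisfying this order-$3$ recurrence at all indices modulo $L$, and its window sequence $\bu^{(i)}:=(t_i,t_{i+1},t_{i+2})$, $i\in\bbra{0,L-1}$, is a closed walk in the $8$-vertex state graph $\cG_g$ of the order-$3$ FSR with feedback $g$; as such it is contained in some cycle of $\cG_g$.

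To finish I would inspect $\cG_g$ directly: its only cycle is $101\to011\to111\to110\to101$, whose four vertices are precisely the $3$-windows of the $4$-periodic sequence $(1110)$. Thus the window walk of any cycle $\bt$ of $\cG_f$ traverses this length-$4$ cycle, which forces $t_{i+4}=t_i$ for all $i$ (so $L\mid4$), while the walk having length $L$ forces $4\mid L$; hence $L=4$ and $\bt$ is a cyclic shift of $(1110)$. So $\cG_f$ has exactly one cycle, the $4$-cycle on the $n$-stage states of $(1110)$ --- which are pairwise distinct because $n\geq3$ --- and the chosen $\bb$, being one of them, lies on it. Every other state reaches this cycle, and hence $\bb$, along its unique forward path, so both conditions of Theorem~\ref{thm:po} hold; the GPO Algorithm then outputs the de Bruijn sequences forming $\mathcal{F}_3(n)$.

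The only step that is not routine is this lifting argument: showing that the cycle structure of the order-$n$ register is governed entirely by the $8$-state graph $\cG_g$, and pinning down $L=4$ from the length of the window walk. Once that is in place, Condition~1 and the enumeration of $\cG_g$ are short finite checks, so I expect the lifting step to be the main obstacle.
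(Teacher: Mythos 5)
Your proposal is correct and follows essentially the same route as the paper: both arguments reduce the theorem to the two conditions of Theorem~\ref{thm:po}, use the absence of $x_0$ in $f$ for the two-children condition, and identify the unique cycle of $\cG_f$ as the $n$-stage states of $(1110)$. The only difference is that the paper merely asserts the cycle structure, whereas your projection of cycles of $\cG_f$ onto closed walks in the order-$3$ state graph $\cG_g$ (whose sole cycle is $111\to110\to101\to011$) actually proves it, so your ``lifting'' step is a welcome filling-in of detail rather than a divergence.
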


\begin{proof}
The state graph $\cG_f$ of the FSR with feedback function $f$ in Equation~\ref{equ-sp} also satisfies the conditions in Theorem \ref{thm:po}.
The coefficient of $x_0$ in $f$ is $0$, implying that each non-leaf state has two children. The graph $\cG_f$ contains a cycle whose vertices are the $n$-stage states of the periodic sequence $(1,1,1,0)$. All other $n$-stage states are divided into disjoint trees whose respective roots are the vertices of the cycle.
\end{proof}

\begin{example}
The state graph $\cG_f$ for $n=4$ and $f$ taken from Equation (\ref{equ-sp}) is in Figure~\ref{fig:SG2}.

\begin{figure}[h]
	\centering % centers the figure
	\begin{tikzpicture}
[
> = stealth,
shorten > = 2pt,
auto,
node distance = 1.4cm and 1cm,
semithick
]

\tikzstyle{every state}=
\node[rectangle,fill=white,draw,rounded corners,minimum size = 4mm]
	
	\node[state] (1) {$0000$};
	\node[state] (2) [right of=1] {$0001$};
	\node[state] (3) [right of=2] {$1000$};
	\node[state] (4) [right of=3] {$0100$};
	\node[state] (5) [below of=2] {$0010$};
	
	\node[state] (6) [left of=5] {$1001$};
	\node[state] (7) [below of=3] {$1100$};
	
	\node[state] (8) [below of=5] {$0101$};
	\node[state] (9) [left of=8] {$1010$};
	\node[state,fill=gray] (10) [right of=8] {$1011$};
	\node[state,fill=gray] (11) [right of=10] {$0111$};
	\node[state] (12) [right of=11] {$0011$};
	
	\node[state,fill=gray] (13) [below of=11] {$1110$};
	\node[state] (14) [right of=13] {$1111$};
	\node[state,fill=gray] (15) [below of=10] {$1101$};
	\node[state] (16) [left of=15] {$0110$};
	
	\path[->] (1) edge (2);
	\path[->] (3) edge (2);
	
	\path[->] (4) edge (3);
	\path[->] (7) edge (3);
	\path[->] (2) edge (5);
	\path[->] (6) edge (5);
	
	\path[->] (5) edge (8);
	\path[->] (9) edge (8);
	
	\path[->] (8) edge (10);
	\path[->] (10) edge (11);
	
	\path[->] (15) edge (10);
	\path[->] (12) edge (11);
	
	\path[->] (11) edge (13);
	
	\path[->] (14) edge (13);
	\path[->] (13) edge (15);
	\path[->] (16) edge (15);
	\end{tikzpicture}
	\caption{The graph $\cG_f$ for $f = 1 + x_1 + x_3+ x_1 \cdot x_2 + x_2 \cdot x_3 + x_1 \cdot x_2 \cdot x_3$.}
	\label{fig:SG2}
\end{figure}
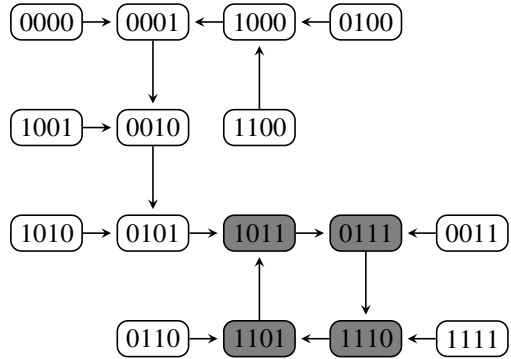

Starting from $\bb=1101$ yields the sequence of states
\begin{align*}
\bb = & 1101 \to 1010 \to 0100 \to 1001 \to 0011 \to 0110 
\to 1100 \to \colorbox[gray]{0.8}{1000} \to 0000 \\
\to &~\colorbox[gray]{0.8}{0001} \to \colorbox[gray]{0.8}{0010} \to \colorbox[gray]{0.8}{0101} 
\to \colorbox[gray]{0.8}{1011} \to \colorbox[gray]{0.8}{0111} \to 1111 \to \colorbox[gray]{0.8}{1110} \to \bb.
\end{align*}
As before, the states in gray are governed by Line~\ref{Algpo:line7} of Algorithm~\ref{algo:po} since the respective preferred states have already appeared. The output is $(1101~0011~0000~1011)$. \QEDB
\end{example}

We have discussed three families of de Bruijn sequences that the GPO Algorithm can construct by carefully selecting the feedback functions and the initial states that combine to satisfy the requirements in Theorem \ref{thm:po}. Table~\ref{table:more} lists four more verified families of feedback functions with $\bb=\0^n$ for $ n \geq 5$. The proof, for each family, follows the usual line of arguments to check that the two conditions in Theorem~\ref{thm:po} are met. The details are omitted here for brevity. Many more combinations of feedback functions and initial states $\bb$ are still there to be discovered. We leave this as an open direction to pursue. 

\begin{table}[h!]
\caption{More Families of Feedback Functions with $\bb=\0^n$ for GPO.}
\label{table:more}
\renewcommand{\arraystretch}{1.1}
\centering
\begin{tabular}{c l l }
		\hline
		No. & $f$ & Condition(s) \\
		\hline
		
		$1$ & $x_{n-1} + x_t \cdot x_{n-1}$ & $0 < t < n-1$ for $n \geq 3$ \\
		
		$2$ & $x_{n-1} + \displaystyle{\prod_{t=1}^{n-1} x_t }$ & $n \geq 3$\\
		
		$3$ & $x_{t} \cdot x_{t+1} + x_{t+1} \cdot x_{n-1}$ & $0 < t < n-2$ for $n \geq 4$ \\
		
		$4$ & $x_k \cdot x_{n-1} + x_{\ell} \cdot x_{n-1}$ & $0 < k < \ell < n-1$ for $n \geq 4$ \\
\hline
\end{tabular}
\end{table}

\section{Modifications of the GPO Algorithm}\label{sec:modify}

The GPO Algorithm is not guaranteed to generate de Bruijn sequences when the  pair $(f,\bb)$ fails to satisfy the conditions in Theorem \ref{thm:po}. This section shows that some modifications to the GPO Algorithm may turn such a pair into a viable choice. 

\subsection{Prefer No}

For a given $n$ and $t \in \bbra{n-1}$ let the feedback function be
\begin{equation}\label{eq:no}
f(x_0,x_1,\ldots,x_{n-1}):= \prod_{j=t}^{n-1} x_j = x_t \cdot x_{t+1} \cdot \ldots \cdot x_{n-1}.
\end{equation}
Note that $\cG_f$ with $f$ in Equation (\ref{eq:no}) does not satisfy the conditions in Theorem \ref{thm:po} since $\cG_f$ is not connected, being divided into two trees with respective roots $\0^n$ and $\1^n$. We modify the GPO Algorithm into the {\tt Prefer-No} Algorithm, presented here as Algorithm~\ref{algo:NO}, and prove that it generates a de Bruijn sequence for each $0 \leq t <n$. When $t=n-1$, the resulting sequence is the same with the one produced by {\tt Prefer-Opposite} in~\cite{Alh10}. Taking $t \in \{0,1\}$ produces the same sequence as the output of {\tt Prefer-One}. %We exclude $t=0$ from the discussion.

\begin{algorithm}[ht!]
\caption{{\tt Prefer-No}}
\label{algo:NO}
\begin{algorithmic}[1]
\renewcommand{\algorithmicrequire}{\textbf{Input:}}
\renewcommand{\algorithmicensure}{\textbf{Output:}}
\Require $n$ and $t$ with $1 \leq t < n$.
\Ensure A de Bruijn sequence of order $n$.
\State{$\bc=c_0, c_1, \ldots, c_{n-1} \gets \0^n$}	
\Do
\State{Print($c_0$)}
\If{$\bc = 0\1^{n-1}$}
\State{$\bc \gets \1^{n}$} \label{Alg2Line5}
\Else
\State{$y \gets c_t \cdot c_{t+1} \cdot \ldots \cdot c_{n-1}$}
\If{$c_1,c_2,\ldots,c_{n-1}, \overline{y}$ has not appeared before}
\State{$\bc \gets c_1,c_2, \ldots, c_{n-1} ,\overline{y}$} \label{Alg2Line8}
\Else
\State{$\bc \gets c_1, c_2, \ldots c_{n-1}, y$} \label{Alg2Line10}
\EndIf
\EndIf	
\doWhile {$\bc \neq \0^n$}
\end{algorithmic}
\end{algorithm}

\begin{theorem}
The {\tt Prefer-No} Algorithm generates a de Bruijn sequence for each tuple $(t,n)$ satisfying $ n \geq 2$ and $ 1 \leq t < n$. We call the resulting set of sequences the family $\mathcal{F}_4(n)$.
\end{theorem}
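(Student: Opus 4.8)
The plan is to reduce the statement to Theorem~\ref{thm:po} by a small surgery on the state graph $\cG_f$ of the feedback function $f$ in Equation~(\ref{eq:no}). Recall that $\cG_f$ fails the hypotheses of Theorem~\ref{thm:po} for one reason only: it splits into two in-trees, one rooted at $\0^n$ (a genuine fixed point, since $f(\0^n)=0$) and one rooted at $\1^n$ (since $f(\1^n)=1$), so there is no directed path from the $\1^n$-component to the chosen root $\0^n$. The {\tt Prefer-No} Algorithm's special case on Line~\ref{Alg2Line5} is exactly an ad hoc patch for this defect: when the current state is $0\1^{n-1}$, instead of following $f$ (which would send it to its companion and keep it inside the $\0^n$-tree), we force the jump to $\1^n$, i.e.\ we traverse the conjugate edge. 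So I would first observe that {\tt Prefer-No} is literally the GPO Algorithm run on the \emph{modified} state graph $\cG'$ obtained from $\cG_f$ by deleting the edge out of $0\1^{n-1}$ that $f$ prescribes and inserting instead the edge $0\1^{n-1}\to\1^n$; the single special line is just bookkeeping for this one redirected edge, and the termination test $\bc\neq\0^n$ is unchanged.

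**Key steps.** (i) Verify that $\cG'$ still has the property that every non-leaf vertex has exactly two children. The only affected vertices are $0\1^{n-1}$ (its out-edge changed but it still has out-degree one) and the two old parents $\widehat{0\1^{n-1}}$ and $\1^n$: the companion of $0\1^{n-1}$, call it $\bz$, loses $0\1^{n-1}$ as a child, while $\1^n$ gains it; one checks from the explicit shape of $f$ that $\bz$ had two children before and now has one, and $\1^n$ had one child before (namely its conjugate $0\1^{n-1}$'s mirror — here I'd just compute the in-neighbourhood of $\1^n$ in $\cG_f$ directly from $f$) and now has two, so the ``exactly two children for non-leaves'' condition is preserved, with the set of leaves adjusted accordingly. (ii) Verify that $\cG'$ has a unique directed path from every vertex to $\0^n$: walking backward from $\0^n$ along $\cG_f$ reaches the whole $\0^n$-tree; the new edge $0\1^{n-1}\to\1^n$ means $0\1^{n-1}$ now also reaches $\1^n$'s old subtree — equivalently, every vertex of the former $\1^n$-component now has a directed path into $0\1^{n-1}$ and thence to $\0^n$, and acyclicity away from the root is retained because we removed an edge from the $\0^n$-tree and attached the old $\1^n$-tree below it. (iii) Conclude by Theorem~\ref{thm:po} that GPO on $(\cG',\,\0^n)$ produces a de Bruijn sequence of order $n$, and then argue that the GPO run on $\cG'$ coincides step-for-step with the {\tt Prefer-No} run: the only place the two could differ is at a visit to $0\1^{n-1}$, and there {\tt Prefer-No} hard-codes the move to $\1^n$, which is exactly the unique child of $0\1^{n-1}$ in $\cG'$ — note that in $\cG'$ the state $0\1^{n-1}$ is no longer a branching vertex, so the ``preferred/already-seen'' dichotomy of GPO degenerates to the forced move, matching Line~\ref{Alg2Line5}. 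Finally dispatch the degenerate parameter ranges ($t\in\{0,1\}$ collapses $f$ to the constant $0$ and {\tt Prefer-No} to {\tt Prefer-One}; $t=n-1$ gives {\tt Prefer-Opposite}), which are already covered by Corollary~\ref{cor:oz} and the cited results, or simply fall under the general argument.

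**Main obstacle.** The delicate point is step (i): one must check, from the precise monomial $f=x_t\cdots x_{n-1}$, that rerouting the single edge out of $0\1^{n-1}$ does not break the ``every non-leaf has two children'' invariant — in particular that $\1^n$ genuinely had only one child in $\cG_f$ (so it does not end up with three) and that its new second child $0\1^{n-1}$ is consistent with the conjugate/companion structure. This is a finite, entirely mechanical verification about which $n$-tuples map to $\0^n$, to $\1^n$, and to the companion of $0\1^{n-1}$ under $f$, but it is where the argument has teeth; everything else is a routine transcription of the algorithm's control flow into the language of $\cG'$ and an appeal to Theorem~\ref{thm:po}. One should also take a moment to confirm the orientation of the special edge: $0\1^{n-1}$ and $\1^n$ are conjugates, so adding $0\1^{n-1}\to\1^n$ is adding a conjugate edge, the standard cycle-joining move, which is precisely why the two components merge into a single in-tree converging to $\0^n$.
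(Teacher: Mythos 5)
Your reduction rests on a misreading of the state graph near $0\1^{n-1}$, and this makes the proposed surgery vacuous. With $f=\prod_{j=t}^{n-1}x_j$ and $t\geq 1$ we have $f(0,1,\ldots,1)=1$, so the edge that $f$ prescribes out of $0\1^{n-1}$ \emph{already} goes to $\1^n$; in particular $0\1^{n-1}$ sits in the $\1^n$-component (it is a child of $\1^n$), not in the $\0^n$-tree. Deleting that edge and ``inserting instead'' $0\1^{n-1}\to\1^n$ changes nothing, so your $\cG'$ is still disconnected and Theorem~\ref{thm:po} still does not apply. What Line~\ref{Alg2Line5} actually overrides is not the $f$-edge but the GPO \emph{preference}: at $0\1^{n-1}$ the GPO rule would prefer the companion $\1^{n-1}0$, and the special line forces the $f$-successor $\1^n$ instead --- necessary because the only predecessors of $\1^n$ in the de Bruijn graph are $0\1^{n-1}$ and $\1^n$ itself, so if the algorithm ever left $0\1^{n-1}$ toward $\1^{n-1}0$ the state $\1^n$ could never be reached. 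Your step (i) also miscounts: $\1^n$ has \emph{two} children in $\cG_f$ (itself via the loop, and $0\1^{n-1}$), not one. And the claim in step (iii) that the ``preferred/already-seen'' dichotomy degenerates because $0\1^{n-1}$ is not a branching vertex misreads GPO: every state has exactly two candidate successors (append $y$ or $\overline{y}$) regardless of its out-degree in $\cG_f$, which is always one.

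Nor can the reduction be repaired by encoding the special line as a modified feedback function: setting $f'(0\1^{n-1})=0$ so that GPO prefers $\1^n$ there does reproduce the algorithm's behaviour (since $\1^n$ is necessarily unvisited when $0\1^{n-1}$ is first reached), but the resulting graph $\cG_{f'}$ violates condition~1 of Theorem~\ref{thm:po}: the state $\1^{n-1}0$ then has exactly one child, and $\1^n$ becomes an isolated loop. This is why the paper does not reduce to Theorem~\ref{thm:po} but instead reruns its argument directly: it shows $\0^n$ is the first repeated state, that the algorithm cannot halt prematurely at $1\0^{n-1}$, $0\1^{n-1}$, or $\1^n$, then applies Lemma~\ref{lemma1} to cover all descendants of $\0^n$ --- among them the leaf $\1^{n-1}0$, which can only be entered from the $\1^n$-component, forcing visits to $\1^n$ and $0\1^{n-1}$ --- and finally applies Lemma~\ref{lemma1} once more to the descendants of $0\1^{n-1}$. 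You would need an argument of this two-component form; the single-edge surgery does not exist.
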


\begin{proof}
Using the same method as in the proof of Theorem~\ref{thm:po} we can confirm that the first state that Algorithm \ref{algo:NO} visits twice must be $\0^n$. 

Now, note that Algorithm~\ref{algo:NO} does not terminate at $1\0^{n-1}$ since it can still proceed to $\0^n$. For the same reason, it does not terminate at $0\1^{n-1}$ or $\1^n$. Any other state, say $\bu$, has two possible successors. At least one of them must have not been visited yet since their two possible predecessors include $\bu$. Thus, Algorithm~\ref{algo:NO} will not stop until it reaches $\0^n$ the second time around.
	
Next, we show that the algorithm visits every state. Before terminating, the algorithm must have visited $1\0^{n-1}$. Each non-leaf descendant of $1\0^{n-1}$ has two children. By a repeated application of Lemma~\ref{lemma1} we confirm that the algorithm must have visited $\0^n$ and all of its descendants. These include $\1^{n-1}0$, which is the successor of $\1^n$. Hence, the algorithm must have visited both $\1^n$ and $0\1^{n-1}$ by following its rule. Next, each non-leaf descendant of $0\1^{n-1}$ has two predecessors. Again, we use Lemma \ref{lemma1} to verify that the algorithm visits all of the descendants of $0\1^{n-1}$. Thus, all possible states are covered and the resulting sequence is indeed de Bruijn.
\end{proof}

\begin{example}
Figure~\ref{fig:SG3} gives the state graph $\cG_f$ with $f=x_2 \cdot x_3$, obtained by letting $n=4$ and $t=2$ in Equation (\ref{eq:no}).
\begin{figure}[h]%[ht]
	\centering % centers the figure
	\begin{tikzpicture}
[
> = stealth,
shorten > = 2pt,
auto,
node distance = 1.4cm and 1cm,
semithick
]

\tikzstyle{every state}=
\node[rectangle,fill=white,draw,rounded corners,minimum size = 4mm]
	
	\node[state] (1) {$0011$};
	\node[state] (3) [below of=1] {$0111$};
	\node[state] (2) [left of=3] {$1011$};
	\node[state] (4) [below of=3] {$1111$};
	
	\node[state] (5) [right of=1] {$0001$};
	\node[state] (9) [right of=5] {$0010$};
	\node[state] (6) [right of=9] {$1001$};
	\node[state] (7) [right of=6] {$0101$};
	\node[state] (10) [below of=6] {$1010$};
	\node[state] (8) [right of=10] {$1101$};

	\node[state] (11) [below of=9] {$0100$};
	\node[state] (15) [below of=11] {$1000$};
	\node[state] (12) [right of=15] {$1100$};
	
	\node[state] (13) [right of=12] {$0110$};
	\node[state] (14) [right of=13] {$1110$};
	
	\node[state] (16) [left of=15] {$0000$};
	
	\path[->] (1) edge (3);
	\path[->] (2) edge (3);
	
	\path[->] (3) edge (4);
	\path[->] (4) edge[loop left] (4);
	
	\path[->] (5) edge (9);
	\path[->] (6) edge (9);
	
	\path[->] (7) edge (10);
	\path[->] (8) edge (10);
	
	\path[->] (9) edge (11);
	\path[->] (10) edge (11);
	
	\path[->] (13) edge (12);
	\path[->] (14) edge[bend right] (12);
	
	\path[->] (11) edge (15);
	\path[->] (12) edge (15);
	
	\path[->] (15) edge (16);
	\path[->] (16) edge[loop above] (16);
	\end{tikzpicture}
	\caption{The graph $\cG_f$ of FSR with $f=x_2 \cdot x_3$ and $n=4$, \ie, $t=2$}
	\label{fig:SG3}
\end{figure}
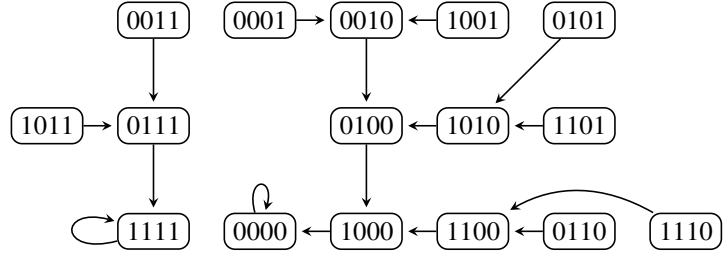
Running Algorithm~\ref{algo:NO} produces the sequence of states
\begin{align*}
\bb &= 0000 \to 0001 \to 0011 \to 0110 \to 1101 \to 1011 
\to \colorbox[gray]{0.8}{0111} \to \underline{1111} \to 1110 \\
\to &~\colorbox[gray]{0.8}{1100} \to 1001 \to \colorbox[gray]{0.8}{0010}
\to 0101 \to \colorbox[gray]{0.8}{1010} \to \colorbox[gray]{0.8}{0100} \to \colorbox[gray]{0.8}{1000} \to \bb.
\end{align*}
The underlined state $1111$ is based on the rule in Line~\ref{Alg2Line5}. The entries in gray follows the rule in Line~\ref{Alg2Line10}. Taking $n=6$, one obtains the $5$ distinct de Bruijn sequences in Table~\ref{table:PreferNo}. \QEDB

%%% The table in two-column format to save space
\begin{comment}
\begin{table}[h!]
\caption{The family $\mathcal{F}_4(6)$ obtained from {\tt Prefer-No}}
\label{table:PreferNo}
\renewcommand{\arraystretch}{1.3}
%\setlength{\tabcolsep}{0.09cm}
\centering
\resizebox{0.32\textheight}{!}{
%\small
\begin{tabular}{cc}
\hline
$t$ & The Resulting de Bruijn Sequence \\
\hline
$1$ & $(0000 0010 ~ 1010 0101 ~ 1010 1110 ~ 1000 1001 $ \\
& $ ~ 10110010 ~ 00011000 ~ 11100111 ~ 10111111)$ \\
		
$2$ & $(00000011 ~ 01101011 ~ 00110001 ~ 11011100 $ \\
& $~ 10111101 ~ 00111111 ~ 00001010 ~ 10001001)$ \\
		
$3$ & $(00000011 ~ 10111001 ~ 11101011 ~ 11110001 $ \\
& $ ~ 10110100 ~ 11001011 ~ 00001010 ~ 10001001)$ \\
		
$4$ & $(00000011 ~ 11011111 ~ 10011101 ~ 01110001 $\\
& $~ 10110100 ~ 11001011 ~ 00001010 ~ 10001001)$ \\
		
$5$ & $(00000011 ~ 11110111 ~ 10011101 ~ 01110001 $\\
& $~ 10110100 ~ 11001011 ~ 00001010 ~ 10001001)$ \\
\hline
\end{tabular}}
\end{table}
\end{comment}

\begin{table}[h!]
\caption{The family $\mathcal{F}_4(6)$ obtained from {\tt Prefer-No}.}
\label{table:PreferNo}
\renewcommand{\arraystretch}{1.1}
\centering
\begin{tabular}{cc}
	\hline
	$t$ & The Resulting de Bruijn Sequence \\
	\hline
	$1$ & $(0000 0010 ~ 1010 0101 ~ 1010 1110 ~ 1000 1001 ~ 10110010 ~ 00011000 ~ 11100111 ~ 10111111)$ \\
	
	$2$ & $(00000011 ~ 01101011 ~ 00110001 ~ 11011100 ~ 10111101 ~ 00111111 ~ 00001010 ~ 10001001)$ \\
	
	$3$ & $(00000011 ~ 10111001 ~ 11101011 ~ 11110001 ~ 10110100 ~ 11001011 ~ 00001010 ~ 10001001)$ \\
	
	$4$ & $(00000011 ~ 11011111 ~ 10011101 ~ 01110001 ~ 10110100 ~ 11001011 ~ 00001010 ~ 10001001)$ \\
	
	$5$ & $(00000011 ~ 11110111 ~ 10011101 ~ 01110001 ~ 10110100 ~ 11001011 ~ 00001010 ~ 10001001)$ \\
	\hline
\end{tabular}
\end{table}
\end{example}

\subsection{From Primitive Polynomials}
We can use any primitive polynomial as the main ingredient. Given a primitive polynomial 
$g(x)=1 + a_1 x + \ldots +a_{m-1} x^{m-1} + x^m \in \F_2[x]$ of degree $m < n$, let 
\begin{equation}\label{equ-f}
f(x_0,x_1,\ldots,x_{n-1}):= x_{n-m} + a_1 x_{n-m+1} + \ldots + a_{m-1} x_{n-1}.
\end{equation}
The state graph $\cG_f$ with $f$   given in Equation (\ref{equ-f}) is divided into two disjoint components. For example, the state graph for $f(x_0,x_1,x_2,x_3)=x_1+x_2$ is in Figure~\ref{fig:SG4}.

\begin{figure}[h]
\centering 
\begin{tikzpicture}
[
> = stealth,
shorten > = 2pt,
auto,
node distance = 1.4cm and 1cm,
semithick
]
	
\tikzstyle{every state}=
\node[rectangle,fill=white,draw,rounded corners,minimum size = 4mm]

\node[state] (1) {$1000$};
\node[state] (2) [below of=1] {$0000$};
	
\node[state] (3) [right of=1] {$1111$};
\node[state,fill=gray] (4) [below of=3] {$1110$};
	
\node[state,fill=gray] (5) [below of=4] {$1100$};
\node[state] (6) [below of=5] {$0110$};
\node[state] (7) [right of=3] {$0011$};
\node[state,fill=gray] (8) [below of=7] {$0111$};
\node[state,fill=gray] (9) [below of=8] {$1001$};
\node[state] (10) [below of=9] {$0100$};
		
\node[state] (11) [right of=7] {$1101$};
\node[state,fill=gray] (12) [below of=11] {$1011$};
\node[state,fill=gray] (13) [below of=12] {$0010$};
\node[state] (14) [below of=13] {$0001$};
\node[state,fill=gray] (15) [right of=13] {$0101$};
\node[state] (16) [below of=15] {$1010$};
	
\path[->] (1) edge (2);
\path[->] (2) edge[loop below] (2);
	
\path[->] (3) edge (4);
\path[->] (4) edge (5);
\path[->] (6) edge (5);
\path[->] (5) edge (9);
\path[->] (10) edge (9);
\path[->] (9) edge (13);
\path[->] (14) edge (13);

\path[->] (13) edge (15);
	
\path[->] (16) edge (15);
\path[->] (15) edge (12);
	
\path[->] (11) edge (12);
\path[->] (12) edge (8);
	
\path[->] (7) edge (8);
\path[->] (8) edge (4);
\end{tikzpicture}
\caption{The graph $\cG_f$ for $n=4$ and $f=x_1+x_2$ for $n=4$}
\label{fig:SG4}
\end{figure}
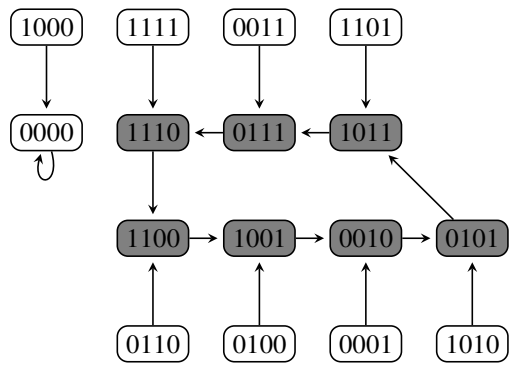

The {\tt Prim-Poly} Algorithm, presented here as Algorithm~\ref{algo:prim}, generalizes the example by modifying the GPO Algorithm. It produces another family, which we call $\mathcal{F}_5(n)$, of de Bruijn sequences by using the function in Equation~\ref{equ-f} and suitable initial states when $g(x)$ ranges over all primitive polynomials of degrees $1$ to $n-1$. The construction  in~\cite{WWZ18} is the special case when $g(x)=1+x+x^2$.

\begin{algorithm}[h]
\caption{{\tt Prim-Poly}}
\label{algo:prim}
\begin{algorithmic}[1]
\renewcommand{\algorithmicrequire}{\textbf{Input:}}
\renewcommand{\algorithmicensure}{\textbf{Output:}}
\Require $n$ and a degree $1 \leq m < n$ primitive polynomial
$g(x) = 1 + a_1 x + \ldots + a_{m-1} x^{m-1} + x^m$.
\Ensure de Bruijn sequences of order $n$.
\State{$\bm \gets $ the $m$-sequence generated by $g(x)$} \Comment{Repeat until the length reaches $n$}
\State{$\bc =c_0,c_1, \ldots, c_{n-1} \gets $ any arbitrary string of length $n$ in $\bm $} \Comment{Enumerate them}\label{pp:line2}
\State{$\bb=b_0,b_1,\ldots, b_{n-1} \gets \bc $}
\Do
\State{Print($c_0$)}
\If{$\bc =1\0^{n-1}$}\label{pp:line6}
	\State{$\bc \gets \0^{n}$}\label{pp:line7}
\Else
	\State{$y \gets c_{n-m} + a_1 c_{n-m+1}+ \ldots + a_{m-1} c_{n-1}$}
	\If{$c_1, c_2, \ldots, c_{n-1}, \overline{y}$ has not appeared before}
	\State{$\bc \gets c_1, c_2, \ldots, c_{n-1}, \overline{y}$}
	\Else
	\State{$\bc \gets c_1, c_2, \ldots, c_{n-1}, y$}
	\EndIf
\EndIf	
\doWhile{$\bc \neq \bb$}
\end{algorithmic}
\end{algorithm}

\begin{theorem}\label{thm:prim}
Let $g(x)$ be a given primitive polynomial of degree $m$. Algorithm \ref{algo:prim} generates de Bruijn sequences for $1 \leq m < n$ from the feedback function $f$ in Equation (\ref{equ-f}).
\end{theorem}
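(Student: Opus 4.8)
The plan is to follow the same template as the proofs of Theorem~\ref{thm:po} and the {\tt Prefer-No} theorem: show that Algorithm~\ref{algo:prim} visits all $2^n$ states, each exactly once, before returning to the initial state $\bb$. First I would record the structure of $\cG_f$ for the $f$ of Equation~(\ref{equ-f}). Since $f$ is linear and, as $n>m$, does not involve $x_0$, every non-leaf state has exactly two children. Moreover $f$ acts on a state only through its last $m$ coordinates, which evolve as the $m$-stage LFSR with characteristic polynomial $g$; as $g$ is primitive, that LFSR has the all-zero fixed point together with one cycle of length $2^m-1$. Translated back to $\cG_f$: the $n$-windows of the $m$-sequence $\bm$ form a cycle $C$ of length $2^m-1$; $\0^n$ is a loop; the $n$-stage states ending in $\0^m$ are exactly those flowing into $\0^n$; and every remaining state flows into $C$. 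Hence $\cG_f$ is the disjoint union of a component $\cC_1$ (the cycle $C$ together with the in-trees rooted on it, which contains $\bb$ because no $m$-window of $\bm$ is $\0^m$) and a component $\cC_2$ (the loop at $\0^n$ together with its in-tree, which contains $1\0^{n-1}$ as a child of $\0^n$, and whose vertex set is precisely the states ending in $\0^m$). Note also that $\0^{n-1}1$ is a leaf of $\cC_1$, since $f(c_0,0,\dots,0)=0\ne 1$ leaves it with no $\cG_f$-predecessor.

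Next I would adapt Lemma~\ref{lemma1}: whenever Algorithm~\ref{algo:prim} visits a two-child state $\bs\notin\{\bb,\0^n\}$, it has already visited both children of $\bs$. The proof is verbatim that of Lemma~\ref{lemma1}; the only transition that is not a GPO step is the override ``$1\0^{n-1}\mapsto\0^n$'' in Lines~\ref{pp:line6}--\ref{pp:line7}, and it changes only the successor of $1\0^{n-1}$, so it can interfere with the claim only for $\bs=\0^n$, which I treat separately as the loop case (just as ``even if one of the two is $\bb$ itself'' is handled in the proof of Theorem~\ref{thm:po}). With this lemma, the argument of Theorem~\ref{thm:po} shows that the first state visited twice is $\bb$, and the observation that $1\0^{n-1}$, $\0^n$, and $\0^{n-1}1$ each always admit an onward move shows the run does not stall; hence the algorithm terminates precisely when it returns to $\bb$, having visited no state more than once in the meantime.

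It remains to prove that the run visits every state. Every vertex of $\cC_1$ is a descendant of $\bb$ (walk around $C$, then down the trees), while no vertex of $\cC_2$ is; so applying the adapted lemma repeatedly, starting from the moment $\bb$ is revisited, shows that all of $\cC_1$ is visited. In particular the leaf $\0^{n-1}1$ is visited. But the only states from which a single GPO step or the override can move to $\0^{n-1}1$ are the two $\cG_f$-predecessors of its companion $\0^n$, namely $\0^n$ itself and $1\0^{n-1}$; and the override sends $1\0^{n-1}$ to $\0^n$, never to $\0^{n-1}1$. Hence the run must visit $\0^n$, and (tracing the first visit to $\0^n$) it must previously have visited $1\0^{n-1}$. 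Since $\cC_2\setminus\{\0^n\}$ is exactly the set of descendants of $1\0^{n-1}$, a final round of the adapted lemma (applied from $1\0^{n-1}$) shows all of $\cC_2$ is visited too. Thus all $2^n$ states appear, each once, so the output is $2^n$-periodic and every $n$-tuple occurs exactly once: it is a de Bruijn sequence of order $n$. Ranging over the arbitrary choice in Line~\ref{pp:line2} and over all primitive $g$ of degrees $1$ through $n-1$ produces the family $\mathcal{F}_5(n)$.

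The step I expect to be the main obstacle is showing that the run actually enters $\cC_2$ and reaches $\0^n$. The ``leaf $\0^{n-1}1$'' argument above makes this clean, but it rests on two structural facts that must be checked from the explicit form of $f$: that $\0^{n-1}1$ has no $\cG_f$-predecessor, and that the override diverts $1\0^{n-1}$ away from $\0^{n-1}1$ and into $\0^n$. A secondary delicate point is confirming that the override does not undermine the ``first repeat is $\bb$'' argument, i.e., that $\0^n$ --- being its own child --- is not visited twice before $\bb$ is revisited; this is the same loop edge case already handled in the proof of Theorem~\ref{thm:po}, and it is dispatched in the same way.
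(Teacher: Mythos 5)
Your proposal is correct and follows essentially the same route as the paper's own proof: establish that $\bb$ is the first repeated state and that the run cannot stall, cover the component of $\bb$ by repeated use of Lemma~\ref{lemma1} down to the leaf $\0^{n-1}1$, deduce that $\0^n$ and $1\0^{n-1}$ must have been visited beforehand, and then cover the zero component by Lemma~\ref{lemma1} again. You merely make explicit some structural facts (the two components of $\cG_f$, the leaf property of $\0^{n-1}1$, the loop edge case at $\0^n$) that the paper leaves implicit.
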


\begin{proof}
Suppose that Algorithm \ref{algo:prim} takes an initial state $\bb$, which is a nonzero $n$-stage state of the $m$-sequence $\bm$ with primitive characteristic polynomial $g(x)$ of degree $m < n$. We can use the similar argument to the proof of Theorem~\ref{thm:po} to confirm that $\bb$ must be the first state to be visited twice.

The algorithm's rule of assignment forces $\0^{n-1}1$ to be the successor of $\0^n$, which in turn is the successor of $1\0^{n-1}$. Hence, the algorithm does not terminate at $\0^n$. Any other state $\bv \neq \0^n$ has two possible successors. When the algorithm visits $\bv$, there remains at least one of $\bv$'s successors to visit, allowing the algorithm to proceed further. The algorithm does not stop until it reaches $\bb$ the second time around.
	
Once the algorithm visits $\bb$ the second time around, it must have visited both of $\bb$'s children in $\cG_f$ by Lemma \ref{lemma1}. Any non-leaf state in the tree $T_{f,\bb}$ has two children. Lemma~\ref{lemma1} ensures that the algorithm has visited all such states as well as their descendants, including $\0^{n-1}1$. To visit $\0^{n-1}1$ the algorithm must have visited $\0^n$ and, prior to that, $1\0^{n-1}$. Since any non-leaf descendant of $1\0^{n-1}$ has two children, again by Lemma~\ref{lemma1}, the algorithm must have covered both $\0^n$ and $1\0^{n-1}$ as well as all of the descendants of $1\0^{n-1}$. Before reaching $\bb$ for the second time, the algorithm must have visited all possible states. Thus, the resulting sequence is de Bruijn.
\end{proof}

\begin{example}\label{ex:primpoly}
Let $n=6$ and $1 \leq m \leq 5$. There are $12$ choices for the primitive polynomial $g(x)$. There are $228$ distinct de Bruijn sequences of order $6$ that Algorithm \ref{algo:prim} can construct. Table~\ref{table:SPFCN} presents only one example for each primitive polynomial $g(x)$ of degree $m$. To save space, $g(x)$ is represented by the coefficients of its monomials in \underline{descending} degree, \eg, $1101$ represents $g(x)=x^3+x^2+1$. \QEDB

\begin{table}[h!]
\caption{Some de Bruijn Sequences in $\mathcal{F}_5(6)$: one from each primitive polynomial $g(x)$ of degree $1 \leq m \leq 5$.}
\label{table:SPFCN}
\renewcommand{\arraystretch}{1.1}
\centering
\small
\begin{tabular}{c r c}
\hline
$m$ & $g(x)$ & The Resulting de Bruijn Sequence \\
\hline
$1$ & $ 11 $ & $(0000 0011 ~ 1111 0101 ~ 0110 1001 ~ 0100 0101 ~ 1101 1001 ~ 0011 0111 ~ 1001 1100 ~ 0110 0001)$\\

$2$ & $ 111 $ & $(0000 0011 ~ 0101 0110 ~ 1111 1100 ~ 1001 0100 ~ 1111 0100 ~ 0100 0010 ~ 1110 0011 ~ 1011 0011)$\\

$3$ & $ 1011 $ & $(0000 0010 ~ 10111001 ~ 00101101 ~ 00011011 ~ 00010000 ~ 11111101 ~ 01001100 ~ 11101111)$\\

& $ 1101 $ & $(00000011 ~ 11011101 ~ 00100111 ~ 00010110 ~ 00010001 ~ 10010101 ~ 11111001 ~ 10110101)$\\

$4$ & $ 10011$ & $(00000010 ~ 11011100 ~ 11111101 ~ 00100110 ~ 10101111 ~ 00010000 ~ 11101100 ~ 10100011)$\\
 & $ 11001$ & $(00000011 ~ 00010010 ~ 11111100 ~ 11110101 ~ 01100100 ~ 01110000 ~ 10100110 ~ 11101101)$\\

$5$ & $100101$ & $(00000011 ~ 10010001 ~ 00101100 ~ 11111100 ~ 01101110 ~ 10100001 ~ 01011110 ~ 11010011)$\\

& $101001$ & $(00000011 ~ 00101101 ~ 11101010 ~ 00101011 ~ 10110001 ~ 11111001 ~ 10100100 ~ 00100111)$\\
& $101111$ & $(00000011 ~ 01100111 ~ 10100101 ~ 01110001 ~ 01101010 ~ 00111011 ~ 11110010 ~ 01100001)$\\
& $110111$ & $(00000010 ~ 11101101 ~ 01001111 ~ 00011100 ~ 11011111 ~ 10100010 ~ 01010110 ~ 00011001)$\\
& $111011$ & $(00000010 ~ 01100011 ~ 01010010 ~ 00101111 ~ 11011001 ~ 11000011 ~ 11001010 ~ 11011101)$\\
& $111101$ & $(00000010 ~ 00110010 ~ 01111110 ~ 11100010 ~ 10110100 ~ 00111010 ~ 10010111 ~ 10011011)$\\
\hline
\end{tabular}
\end{table}
\end{example}

Based on distinct initial states, almost all of the de Bruijn sequences generated by Algorithm \ref{algo:prim} over all degree $m$ primitive polynomials for $1 \leq m < n$ are distinct. Computational evidences suggest that the only exemption occurs for each primitive polynomial $g(x)$ of degree $m=n-1$ where there is exactly one pair of distinct states, namely $0\1^{n-1}$ and $\1^{n-1}0$, that yields two cyclically equivalent de Bruijn sequences.

Let $\varphi$ be Euler's totient function. There are $\frac{\varphi \left(2^m-1 \right)}{m}$ primitive binary polynomials of degree $m \geq 1$. For $1 \leq m \leq n-2$ the number of distinct initial states from the $m$-sequence $\bm$ is its period $2^m-1$. Distinct initial sequences yield distinct de Bruijn sequences. When $2 < n = m+1$, there are only $2^{n-1}-2$ distinct de Bruijn sequences produced from the $2^{n-1}-1$ distinct initial states since $\1^{m}0$ and $0\1^{m}$ yield the same de Bruijn sequence. We formalize this observation and note that the proof follows the same reasoning as that of Theorem~\ref{thm:distinct}. The crux move is to show that there are no other collisions than the one already mentioned above. The details are omitted here.

\begin{theorem}\label{thm:distinctprim}
Let $n > 2$. Then
\[
\abs{\mathcal{F}_5(n)} = \frac{\varphi \left(2^{n-1}-1 \right)}{n-1} (2^{n-1}-2) + \sum_{m=1}^{n-2} \frac{\varphi \left(2^m-1 \right)}{m} (2^m-1).
\] 	
\end{theorem}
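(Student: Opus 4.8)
The plan is to stratify $\mathcal{F}_5(n)$ by the degree $m$ of the primitive polynomial $g(x)$ supplied to Algorithm~\ref{algo:prim}, so $1 \leq m \leq n-1$, and to count for each $m$ the cyclically inequivalent de Bruijn sequences produced. First I would invoke two standard facts: there are exactly $\frac{\varphi(2^m-1)}{m}$ primitive polynomials of degree $m$ over $\F_2$, and the $m$-sequence $\bm$ generated by such a $g(x)$ has least period $2^m-1$, its consecutive $m$-tuples exhausting all $2^m-1$ nonzero $m$-tuples. Since $m < n$, two length-$n$ factors of $\bm$ that coincide must already coincide on their leading $m$ coordinates and hence start at positions congruent modulo $2^m-1$; therefore Line~\ref{pp:line2} admits exactly $2^m-1$ distinct initial states $\bb$ for each such $g(x)$. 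By Theorem~\ref{thm:prim}, every one of the resulting $\frac{\varphi(2^m-1)}{m}\,(2^m-1)$ runs outputs a de Bruijn sequence of order $n$.

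The heart of the argument is an injectivity statement: the assignment $(g,\bb)\mapsto\bS$ is injective up to cyclic equivalence, with the sole exception that, for each primitive $g(x)$ of degree $m=n-1$, the two initial states $0\1^{n-1}$ and $\1^{n-1}0$ — which are the conjugate and the companion of $\1^n$, and are the two consecutive length-$n$ factors of $\bm$ flanking its unique run $\1^{n-1}$ — produce cyclically equivalent outputs. Granting this, the contribution of every degree $m \leq n-2$ is $\frac{\varphi(2^m-1)}{m}\,(2^m-1)$, the contribution of degree $n-1$ is $\frac{\varphi(2^{n-1}-1)}{n-1}\,(2^{n-1}-2)$, and since the injectivity statement also forbids coincidences between different degrees and between different polynomials of the same degree, summing over $1 \leq m \leq n-1$ yields the claimed value of $\abs{\mathcal{F}_5(n)}$.

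To establish the injectivity I would follow the strategy behind the appendix proof of Theorem~\ref{thm:distinct}. Every output $\bS$ contains the factor $\1\,\0^{n}\,\1$: the block $\0^n$ occurs exactly once in $\bS$, and Lines~\ref{pp:line6} and~\ref{pp:line7} force the state transitions $1\0^{n-1}\to\0^n\to\0^{n-1}1$, which fix the $n+2$ consecutive bits around that occurrence. Reading $\bS$ forward from the state $\0^{n-1}1$, the deterministic rule of Algorithm~\ref{algo:prim} successively reveals the value of $f$ on a chain of states with zero leading entries, and these values are exactly the coefficients of $g(x)$; thus $m$ and $g(x)$ are recovered from this distinguished stretch of $\bS$. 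Once $g(x)$ is known, $\bb$ must be one of the $2^m-1$ length-$n$ factors of $\bm$, and a direct analysis of when two such candidates can produce the same necklace — which comes down to their relative position on the cycle of length-$n$ windows of $\bm$ inside $\cG_f$ — shows that a coincidence occurs precisely when $m=n-1$, forced by the maximal-length run $\1^{n-1}$ adjacent to $\0^n$, and that the coinciding pair is exactly $\{0\1^{n-1},\,\1^{n-1}0\}$, the analogue of the $n=m+1$ collision in Theorem~\ref{thm:distinct}.

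The step I expect to be the main obstacle is precisely this reconstruction-and-no-extra-collisions argument: one must show that the history-dependent ``has appeared before'' test in the greedy rule does not corrupt the encoding of $(g,\bb)$ inside $\bS$ near the all-zero window, so that the recovery above is valid, and that it creates no spurious coincidences when $m \leq n-2$ and none beyond the documented pair when $m=n-1$. Everything else — counting the primitive polynomials, counting the admissible initial windows, and assembling the final sum — is routine.
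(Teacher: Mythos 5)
Your counting skeleton is exactly the paper's: stratify by the degree $m$ of $g(x)$, multiply the $\frac{\varphi(2^m-1)}{m}$ primitive polynomials by the $2^m-1$ distinct length-$n$ windows of $\bm$, and subtract one per polynomial in degree $m=n-1$ for the single collision $\{0\1^{n-1},\,\1^{n-1}0\}$. Be aware, however, that the paper never carries out the step you flag as your main obstacle: it asserts that the proof ``follows the same reasoning as that of Theorem~\ref{thm:distinct}'', supports the absence of further collisions by computational evidence, and explicitly omits the details, so your honestly acknowledged gap coincides with the gap the paper itself leaves open. Your one genuinely new ingredient is the proposal to recover $g(x)$ from the stretch of $\bS$ following the forced transitions $1\0^{n-1}\to\0^n\to\0^{n-1}1$ in order to exclude collisions between different polynomials --- something the paper does not even sketch --- but, as you correctly note, this still has to contend with the history-dependence of the greedy rule, so it remains a plan rather than a proof. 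As a match to the paper's own treatment your proposal takes the same route and sits at the same level of rigor; a complete argument would require executing either that reconstruction or an analogue of the tree-enumeration argument in the appendix proof of Theorem~\ref{thm:distinct}.
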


Next, we describe the family $\mathcal{F}_6$, which is very closely related to $\mathcal{F}_5$. Again, let $g(x)$ ranges over all suitable primitive polynomials $ 1 + a_1 x + \ldots + a_{m-1} x^{m-1} + x^m \in \F_2[x]$ of degree $1 \leq m < n$. Let $f(x_0,x_1,\ldots,x_{n-1})$ be the feedback function in Equation (\ref{equ-f}), defined by $g(x)$. We use the feedback function
\begin{equation}\label{eq:yes}
\widetilde{f}(x_0,x_1,\ldots,x_{n-1}) := f(x_0,x_1,\ldots,x_{n-1})+1 
= 1 + x_{n-m} + a_1 x_{n-m+1} + \ldots + a_{m-1} x_{n-1}
\end{equation}
as an input in a slightly modified Algorithm~\ref{algo:prim}. The initial state is now the \emph{complement} of any string of length $n$ in the $m$-sequence $\bm$. 
%, \ie, we replace $\bc$ in Line~\ref{pp:line2} of Algorithm~\ref{algo:prim} by its conjugate $\overline{\bc}$. 
Consequently, in Line~\ref{pp:line6}, we use $\bc=0\1^{n-1}$ and, in Line~\ref{pp:line7} the assignment rule becomes $\bc \gets \1^n$. The respective analogues of Example~\ref{ex:primpoly} and Theorem~\ref{thm:distinctprim} also hold for $\mathcal{F}_6(n)$.

\subsection{Other Special Functions}\label{subsec:spfcn}

A strategy that works well is to choose special feedback functions whose respective corresponding state graphs are similar to the successful ones above. Here we provide another example. Such functions are far too numerous to list. We invite the readers to come up with their own favourites.

\begin{algorithm}[h]
\caption{A Special Function}
\label{algo:specialfcn}
\begin{algorithmic}[1]
\renewcommand{\algorithmicrequire}{\textbf{Input:}}
\renewcommand{\algorithmicensure}{\textbf{Output:}}
\Require the feedback function $f$ in Equation (\ref{equ-sff}).
\Ensure A de Bruijn sequence of order $n$.
\State{$\bc =c_0,c_1, \ldots, c_{n-1} \gets $ any $n$-stage state of the periodic sequence $(0111)$} \Comment{Enumerate them}
\State{$\bb=b_0,b_1,\ldots, b_{n-1} \gets \bc $}
\Do
	\State{Print($c_0$)}
	\If{$\bc =1,\0^{n-1}$}
		\State{$\bc \gets \0^{n}$}
	\Else
		\State{$y \gets c_{n-3} + c_{n-2}+ c_{n-3} \cdot c_{n-2} + c_{n-3} \cdot c_{n-2} \cdot c_{n-1}$}
		\If{$c_1, c_2, \ldots, c_{n-1}, \overline{y}$ has not appeared before}
			\State{$\bc \gets c_1, c_2, \ldots, c_{n-1}, \overline{y}$}
		\Else
			\State{$\bc \gets c_1, c_2, \ldots, c_{n-1}, y$}
		\EndIf
	\EndIf	
\doWhile{$\bc \neq \bb$}
\end{algorithmic}
\end{algorithm}

\begin{proposition}\label{prop:gp}
Let $n \geq 4$. We run Algorithm \ref{algo:specialfcn} on
\begin{equation}\label{equ-sff}
f(x_0,x_1,\ldots,x_{n-1}):= 
x_{n-3} + x_{n-2} + x_{n-3} \cdot x_{n-2} + x_{n-3} \cdot x_{n-2} \cdot x_{n-1}
\end{equation}
The initial state can be any of the $n$-stage state of the periodic sequence $(0111)$. Then the resulting sequence is de Bruijn of order $n$.
\end{proposition}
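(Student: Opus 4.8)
The plan is to reuse, almost verbatim, the three-step argument behind Theorems~\ref{thm:po} and~\ref{thm:prim}, once the state graph $\cG_f$ of the function in Equation~(\ref{equ-sff}) is understood. The only genuinely new ingredient is a clean description of $\cG_f$; everything after that is the same cycle-joining bookkeeping already carried out for $\mathcal{F}_4(n)$ and $\mathcal{F}_5(n)$, so I would keep those parts brief.

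First I would describe $\cG_f$. Since $f$ depends only on $x_{n-3},x_{n-2},x_{n-1}$ (here I use $n\ge 4$), write $f=h(x_{n-3},x_{n-2},x_{n-1})$. A purely periodic sequence of this FSR must satisfy $s_j=h(s_{j-3},s_{j-2},s_{j-1})$ for all $j$, so it is exactly a periodic orbit of the three-state map $(a,b,c)\mapsto(b,c,h(a,b,c))$. That map has precisely two cycles, the fixed point $(0,0,0)$ and the $4$-cycle through $(1,1,1)$, which correspond to the constant sequence $\0^n$ and the sequence $(0111)$. Hence $\cG_f$ has exactly two cycles — a self-loop at $\0^n$ and a $4$-cycle on the $n$-stage states of $(0111)$ — so, being a functional graph, it is the disjoint union of the in-tree $T$ converging to $\0^n$ and the component $C$ built around the $4$-cycle. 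Because $x_0$ does not occur in $f$, every non-leaf state has two children, so the first condition of Theorem~\ref{thm:po} holds. Direct evaluation gives $f(\0^n)=f(1\0^{n-1})=0$, so $1\0^{n-1}\in T$, while the continuation of $\0^{n-1}1$ soon produces another $1$ and so never reaches $\0^n$, placing $\0^{n-1}1\in C$. Note that the rule of Algorithm~\ref{algo:specialfcn} replaces the greedy choice at $1\0^{n-1}$ by the ``append $f$'' move $1\0^{n-1}\to\0^n$, which is the genuine $\cG_f$-edge; this is what keeps it compatible with the bookkeeping in Lemma~\ref{lemma1}.

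Next I would run the usual three steps with $\bb$ a state of $C$ lying on the $4$-cycle. \emph{Step one: the first state visited twice is $\bb$.} The ``prefer-opposite'' clause forbids visiting a state $c_1,\dots,c_{n-1},\overline{f(\bc)}$ twice, so a first-repeated state $\bu\ne\bb$ would be reached by an ``append $f$'' move, hence be a non-leaf; Lemma~\ref{lemma1} then forces a child of $\bu$ to have been repeated earlier, a contradiction, exactly as in the proof of Theorem~\ref{thm:po}. A short separate argument rules out $\bu=\0^n$, using that $\0^{n-1}1$ is reachable in the walk only from $\0^n$, that $\0^n$ is reachable only from $1\0^{n-1}$ or by its own loop, and that $\bb\notin T$. \emph{Step two: the walk does not stall before returning to $\bb$.} Since $f(\0^n)=0$ and $\0^{n-1}1$ is unvisited when $\0^n$ is first reached, the greedy rule sends $\0^n\to\0^{n-1}1$, and at every other state at least one of the two possible successors is still free; as the state set is finite, the walk must come back to $\bb$. \emph{Step three: all states are visited.} On revisiting $\bb$, repeated use of Lemma~\ref{lemma1} (legitimate at every non-leaf of $C$, none of which is $1\0^{n-1}$) shows that $T_{f,\bb}=C$ has been covered, in particular $\0^{n-1}1$; hence $\0^n$, and hence $1\0^{n-1}$, have been visited; a final cascade of Lemma~\ref{lemma1} from $1\0^{n-1}$ then exhausts the rest of $T$ (when $n=4$ this cascade is vacuous, but there $T=\{\0^n,1\0^{n-1}\}$ is already covered). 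So all $2^n$ states occur and the output is a de Bruijn sequence.

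I expect the one delicate point to be the interaction of the forced move at $1\0^{n-1}$ with Lemma~\ref{lemma1} and with the first-repeat argument around $\0^n$: one must check that the lemma is only invoked at states $\bs$ (and about companions $\widehat{\bs}$) for which the algorithm still behaves greedily, and that the bridging chain $1\0^{n-1}\to\0^n\to\0^{n-1}1$ is traversed exactly once and in the forced direction. The structural observation in the second paragraph is the ``certificate'' that singles out this $f$; granting it, the rest parallels the earlier proofs.
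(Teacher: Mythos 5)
Your proposal is correct and follows essentially the same route as the paper, which simply remarks that Proposition~\ref{prop:gp} ``follows the argument in the proof of Theorem~\ref{thm:prim}'' because the state graph of $f$ in Equation~(\ref{equ-sff}) has the same shape (a component converging to the loop at $\0^n$ through $1\0^{n-1}$, plus a component built around the cycle of $(0111)$ containing the chosen root). You supply more detail than the paper does—in particular the verification via the three-stage map that these are the only two cycles, and the care about where Lemma~\ref{lemma1} may legitimately be invoked given the forced move $1\0^{n-1}\to\0^n$—and those details check out.
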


The proof of Proposition~\ref{prop:gp} follows the argument in the proof of Theorem \ref{thm:prim} since the state graph for $f$ in Equation (\ref{equ-sff}) is similar with the one in Theorem~\ref{thm:prim}.

\begin{example}
Figure~\ref{fig:SG5} presents $\cG_f$ with $n=4$ and $f$ in (\ref{equ-sff}).
\begin{figure}[ht]
\centering % centers the figure
\begin{tikzpicture}
[
> = stealth,
shorten > = 2pt,
auto,
node distance = 1.4cm and 1cm,
semithick
]

\tikzstyle{every state}=
\node[rectangle,fill=white,draw,rounded corners,minimum size = 4mm]
	
	\node[state] (1) {$1000$};
	\node[state] (2) [below of=1] {$0000$};
	
	\node[state] (3) [right of=2] {$1010$};
	\node[state] (4) [right of=3] {$0101$};
	
	\node[state] (5) [above of=4] {$0100$};
	\node[state] (6) [right of=5] {$1001$};
	\node[state] (7) [right of=6] {$1100$};
	\node[state] (8) [below of=7] {$0001$};
	\node[state] (9) [below of=6] {$0010$};
	\node[state,fill=gray] (10) [below of=4] {$1011$};
	
	\node[state,fill=gray] (11) [right of=10] {$1101$};
	\node[state] (12) [right of=11] {$0110$};
	\node[state,fill=gray] (13) [below of=10] {$0111$};
	\node[state] (14) [left of=13] {$0011$};
	\node[state,fill=gray] (15) [right of=13] {$1110$};
	\node[state] (16) [right of=15] {$1111$};
	
	\path[->] (1) edge (2);
	\path[->] (2) edge[loop below] (2);
	
	\path[->] (3) edge (4);
	\path[->] (4) edge (10);
	\path[->] (5) edge (6);
	\path[->] (7) edge (6);
	\path[->] (6) edge (9);
	\path[->] (8) edge (9);
	\path[->] (9) edge (4);
	
	\path[->] (12) edge (11);
	
	\path[->] (11) edge (10);
	\path[->] (10) edge (13);
	
	\path[->] (14) edge (13);
	\path[->] (13) edge (15);
	
	\path[->] (16) edge (15);
	\path[->] (15) edge (11);
	\end{tikzpicture}
	\caption{The graph $\cG_f$ for $f=x_1 + x_2 + x_1 \cdot x_2 + x_1 \cdot x_2 \cdot x_3$ from Equation (\ref{equ-sff})}
	\label{fig:SG5}
\end{figure}
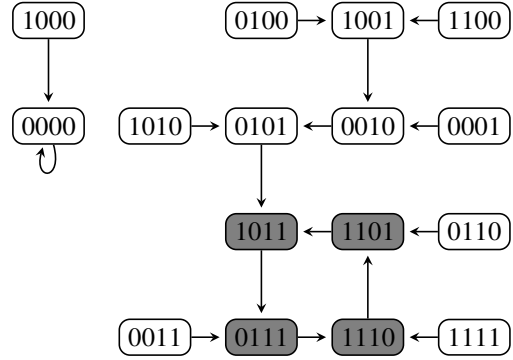
Starting from $\bb=1011$ we obtain the sequence of states
\begin{align*}
\bb = & 1011 \to 0110 \to 1100 \to 1000 \to \underline{0000} \to 0001 
\to 0011 \to \colorbox[gray]{0.8}{0111} \to  1111 \\
\to &~\colorbox[gray]{0.8}{1110} \to \colorbox[gray]{0.8}{1101} \to 1010 
\to 0100 \to \colorbox[gray]{0.8}{1001} \to \colorbox[gray]{0.8}{0010} \to \colorbox[gray]{0.8}{0101} \to \bb.
\end{align*}
\QEDB
\end{example}

\section{Conclusion}\label{sec:conclu}
We put many known greedy algorithms for binary de Bruijn sequences into the framework of GPO Algorithm. We establish easy-to-check conditions that guarantee the efficacy of this general algorithm and its variants. The Boolean maps, \ie, the truth table in its algebraic normal form (ANF) is used to output the resulting de Bruijn sequence. Efficiency in either time or memory complexity as $n$ grows is not a strong aspect of most greedy algorithms. Here, the bottleneck clearly lies in checking if a particular $n$-string has appeared. In our basic {\tt python} implementation, the search is iterated through the state graph. A typical modern laptop with $16$~GB of memory has the required resources to comfortably handle $n \leq 15$. Devising a better general search approach is interesting but lies outside the scope of this paper. 

A complete list of suitable feedback functions and initial states are easy to generate for small $n$. Table~\ref{table:GPO3} is for $n=3$. The combined Tables~\ref{table:GPO4} and~\ref{table:GPO4part2} is for $n=4$. They lead to the following assertion. For any $n >2$, let $\bS$ be any de Bruijn sequence of order $n$ and let $\bb$ be any $n$-string. Then there exists $f(x_0,x_1,\ldots,x_{n-1})$ such that $\bS$ is the output of the GPO Algorithm on input $(f,\bb)$. 

One defines a function $g(x_0,x_1,\ldots,x_{n-2})$ while traversing $\bS$, starting from the initial state $\bb=b_0,b_1,\ldots,b_{n-1}$. Since the root is special, $g(b_0,b_1,\ldots,b_{n-2}):=b_{n-1}$. If $y$ is the next bit after the \emph{first} occurance of any other $n-1$ string $(x_0,x_1,\ldots,x_{n-2})$, then $g(x_0,x_1,\ldots,x_{n-2}):=\overline{y}$. This function $g$ defines a suitable feedback function $f(x_0,x_1,\ldots,x_{n-1}) := g(x_1,\ldots,x_{n-1})$ to use in the GPO Algorithm. 

\begin{example}
Let $n=3$, $\bS=(0001~0111)$, and $\bb=101$. We have $g(1,0)=1$, $g(0,1)=0$, $g(1,1)=0$, and $g(0,0)=1$. Hence, $g(x_0,x_1) = 1 + x_1$ and $f(x_0,x_1,x_2) := g(x_1,x_2) = 1 + x_2$. On input $\bb=101$ and $f(x_0,x_1,x_2) =1+x_2$, the algorithm generates $\bS$.

Let $n=4$, $\bS=(1111~0110~0101~0000)$, and $\bb=1111$. We have 
\[
g(\bv)=1 \mbox{ for } \bv \in \{111,011,001,000\} \mbox{ and }
g(\bv)=0 \mbox{ for } \bv\in \{110,101,100,010\}.
\]
Hence, $g(x_0,x_1,x_2)=x_0 \cdot x_1 + x_0 + x_1 \cdot x_2 + x_1 + 1$ and, therefore, $f(x_0,x_1,x_2,x_3)=g(x_1,x_2,x_3) = x_1 \cdot x_2 + x_1 + x_2 \cdot x_3 + x_2 + 1$. \QEDB
\end{example}

\begin{table}[h]
	\caption{GPO List for $n = 3$}
	\label{table:GPO3}
	\renewcommand{\arraystretch}{1.1}
	\centering
%	\scriptsize
	\begin{tabular}{c  c c | c c}
		\hline
		No. & $f$ for $\bS=(0001~0111)$  & $\bb \in $ & $f$ for $\bS=(0001~1101)$ & $\bb \in $\\
		\hline
		$1$ & $x_1 \cdot x_2 + x_1 + x_2 + 1$ & $\{010\}$ & $x_1 \cdot x_2 + x_1 + x_2 + 1$ & $\{100, 001\}$ \\
		
		$2$ &  $x_1 + 1$ & $\{100,001\}$ & $x_2 \cdot x_1 + x_1 + 1$ & $\{111\}$\\
		$3$ & $x_2 + 1$ & $\{101\}$ & $x_1 + 1$ & $\{011,110\}$ \\
		$4$ & $x_2 \cdot x_1 + 1$ & $\{011,110\}$  & $x_2 + 1$ & $\{010\}$\\
		$5$ & $x_2 \cdot x_1 + x_2$ & $\{000\}$  & $x_2 \cdot x_1 + 1$ & $\{101\}$ \\
		$6$ & $1$ & $\{111\}$  & $0$ & $\{000\}$ \\
		\hline
	\end{tabular}
\end{table}

\section*{Acknowledgements}
The work of Z.~Chang is supported by the National Natural Science Foundation of China under Grant 61772476 and the Key Scientific Research Projects of Colleges and Universities in Henan Province under Grant 18A110029. Nanyang Technological University Grant Number M4080456 supports the research carried out by M.~F.~Ezerman and A.~A.~Fahreza.

\section*{References}
%\bibliographystyle{elsarticle-num}
%\bibliography{greedy}

\appendix

\section{Proof of Theorem~\ref{thm:distinct}}

To make the proof easier to follow we start with an illustration.

Let $m=3$ and $h(x_0,x_1,x_2)=1 + x_0 + x_1 \cdot x_2 + x_2$. Figure~\ref{fig:M3} is the state graph $\cG_h$ corresponding to the FSR that produces the de Bruijn sequence $\bS_3:=(1100~0101)$. Figures~\ref{fig:M4} and~\ref{fig:M5} give, respectively, the state graphs corresponding to the functions $f_4=1+ x_1 + x_2 \cdot x_3 + x_3$ and $f_5=1 + x_2 + x_3 \cdot x_4 + x_4$.

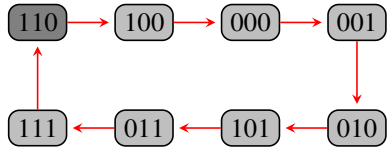
\begin{figure}[h]
	\centering
	\begin{tikzpicture}
	[
	> = stealth,
	shorten > = 2pt,
	auto,
	node distance = 1.4cm and 1cm,
	semithick
	]
	
	\tikzstyle{every state}=
	\node[rectangle,fill=white,draw,rounded corners,minimum size = 4mm]
	
	\node[state,fill=gray] (1) {$110$};
	\node[state,fill=lightgray] (2) [right of=1] {$100$};
	\node[state,fill=lightgray] (3) [right of=2] {$000$};
	\node[state,fill=lightgray] (4) [right of=3] {$001$};
	\node[state,fill=lightgray] (5) [below of=4] {$010$};
	\node[state,fill=lightgray] (6) [left of=5] {$101$};
	\node[state,fill=lightgray] (7) [left of=6] {$011$};
	\node[state,fill=lightgray] (8) [below of=1] {$111$};
	
	\path[red,->] (1) edge (2);
	\path[red,->] (2) edge (3);
	\path[red,->] (3) edge (4);
	\path[red,->] (4) edge (5);
	\path[red,->] (5) edge (6);
	\path[red,->] (6) edge (7);
	\path[red,->] (7) edge (8);
	\path[red,->] (8) edge (1);
	\end{tikzpicture}
	\caption{The state graph $\cG_{h_3}$ for $h_3=1+ x_0 + x_1 \cdot x_2 + x_2$.}
	\label{fig:M3}
\end{figure}

\begin{figure}[h]
	\centering
	\begin{tikzpicture}
	[
	> = stealth,
	shorten > = 2pt,
	auto,
	node distance = 1.5cm,
	semithick
	]
	
	\tikzstyle{every state}=
	\node[rectangle,fill=white,draw,rounded corners,minimum size = 4mm]
	
	\node[state,fill=gray] (1) {$\underline{1\overline{10}}\overline{0}$};
	\node[state,fill=lightgray] (2) [right of=1] {$\underline{1\overline{00}}\overline{0}$};
	\node[state,fill=lightgray] (3) [right of=2] {$\underline{0\overline{00}}\overline{1}$};
	\node[state,fill=lightgray] (4) [right of=3] {$\underline{0\overline{01}}\overline{0}$};
	\node[state,fill=lightgray] (5) [below of=4] {$\underline{0\overline{10}}\overline{1}$};
	\node[state,fill=lightgray] (6) [left of=5] {$\underline{1\overline{01}}\overline{1}$};
	\node[state,fill=lightgray] (7) [left of=6] {$\underline{0\overline{11}}\overline{1}$};
	\node[state,fill=lightgray] (8) [below of=1] {$\underline{1\overline{11}}\overline{0}$};
	
	\node[state] (9) [left of=1] {$0110$};
	\node[state] (10) [above of=2] {$0100$};
	\node[state] (11) [above of=3] {$0000$};
	\node[state] (12) [right of=4] {$1001$};
	\node[state] (13) [right of=5] {$1010$};
	\node[state] (14) [below of=6] {$1101$};
	\node[state] (15) [below of=7] {$0011$};
	\node[state] (16) [left of=8] {$1111$};
	
	\path[red,->] (1) edge (2);
	\path[red,->] (2) edge (3);
	\path[red,->] (3) edge (4);
	\path[red,->] (4) edge (5);
	\path[red,->] (5) edge (6);
	\path[red,->] (6) edge (7);
	\path[red,->] (7) edge (8);
	\path[red,->] (8) edge (1);
	
	\path[->] (9) edge (1);
	\path[->] (10) edge (2);
	\path[->] (11) edge (3);
	\path[->] (12) edge (4);
	\path[->] (13) edge (5);
	\path[->] (14) edge (6);
	\path[->] (15) edge (7);
	\path[->] (16) edge (8);
	\end{tikzpicture}
	\caption{The state graph $\cG_{f_4}$ for $f_4 = 1+ x_1 + x_2 \cdot x_3 + x_3$.}
	\label{fig:M4}
\end{figure}
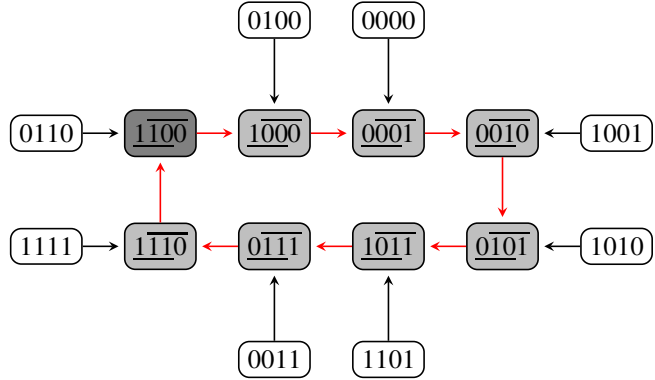

\begin{figure*}[ht!]
	\centering
	\begin{tikzpicture}
	[
	> = stealth,
	shorten > = 2pt,
	auto,
	node distance = 1.5cm,
	semithick
	]
	
	\tikzstyle{every state}=
	\node[rectangle,fill=white,draw,rounded corners,minimum size = 4mm]
	
	\node[state,fill=gray] (1) {$\underline{11\overline{0}}\overline{00}$};
	\node[state,fill=lightgray] (2) [right of=1] {$\underline{10\overline{0}}\overline{01}$};
	\node[state,fill=lightgray] (3) [right of=2] {$\underline{00\overline{0}}\overline{10}$};
	\node[state,fill=lightgray] (4) [right of=3] {$\underline{00\overline{1}}\overline{01}$};
	\node[state,fill=lightgray] (5) [below of=4] {$\underline{01\overline{0}}\overline{11}$};
	\node[state,fill=lightgray] (6) [left of=5] {$\underline{10\overline{1}}\overline{11}$};
	\node[state,fill=lightgray] (7) [left of=6] {$\underline{01\overline{1}}\overline{10}$};
	\node[state,fill=lightgray] (8) [below of=1] {$\underline{11\overline{1}}\overline{00}$};
	
	\node[state] (9) [left of=1] {$01100$};
	\node[state] (10) [above of=2] {$01000$};
	\node[state] (11) [above of=3] {$00001$};
	\node[state] (12) [right of=4] {$10010$};
	\node[state] (13) [right of=5] {$10101$};
	\node[state] (14) [below of=6] {$11011$};
	\node[state] (15) [below of=7] {$00111$};
	\node[state] (16) [left of=8] {$11110$};
	
	\node[state] (17) [left of=9] {$00110$};
	\node[state] (18) [above of=9] {$10110$};
	
	\node[state] (19) [left of=10] {$10100$};
	\node[state] (20) [above of=10] {$00100$};
	
	\node[state] (21) [above of=11] {$10000$};
	\node[state] (22) [right of=11] {$00000$};
	
	\node[state] (23) [above of=12] {$01001$};
	\node[state] (24) [right of=12] {$11001$};
	
	\node[state] (25) [right of=13] {$11010$};
	\node[state] (26) [below of=13] {$01010$};
	
	\node[state] (27) [right of=14] {$01101$};
	\node[state] (28) [below of=14] {$11101$};
	
	\node[state] (29) [below of=15] {$00011$};
	\node[state] (30) [left of=15] {$10011$};
	
	\node[state] (31) [below of=16] {$11111$};
	\node[state] (32) [left of=16] {$01111$};
	
	\path[red,->] (1) edge (2);
	\path[red,->] (2) edge (3);
	\path[red,->] (3) edge (4);
	\path[red,->] (4) edge (5);
	\path[red,->] (5) edge (6);
	\path[red,->] (6) edge (7);
	\path[red,->] (7) edge (8);
	\path[red,->] (8) edge (1);
	
	\path[->] (9) edge (1);
	\path[->] (10) edge (2);
	\path[->] (11) edge (3);
	\path[->] (12) edge (4);
	\path[->] (13) edge (5);
	\path[->] (14) edge (6);
	\path[->] (15) edge (7);
	\path[->] (16) edge (8);
	
	\path[->] (17) edge (9);
	\path[->] (18) edge (9);
	
	\path[->] (19) edge (10);
	\path[->] (20) edge (10);
	
	\path[->] (21) edge (11);
	\path[->] (22) edge (11);
	
	\path[->] (23) edge (12);
	\path[->] (24) edge (12);
	
	\path[->] (25) edge (13);
	\path[->] (26) edge (13);
	
	\path[->] (27) edge (14);
	\path[->] (28) edge (14);
	
	\path[->] (29) edge (15);
	\path[->] (30) edge (15);
	
	\path[->] (31) edge (16);
	\path[->] (32) edge (16);
	\end{tikzpicture}
	\caption{The state graph $\cG_{f_5}$ for $f_5 = 1+ x_2 + x_3 \cdot x_4 + x_4$.}
	\label{fig:M5}
\end{figure*}
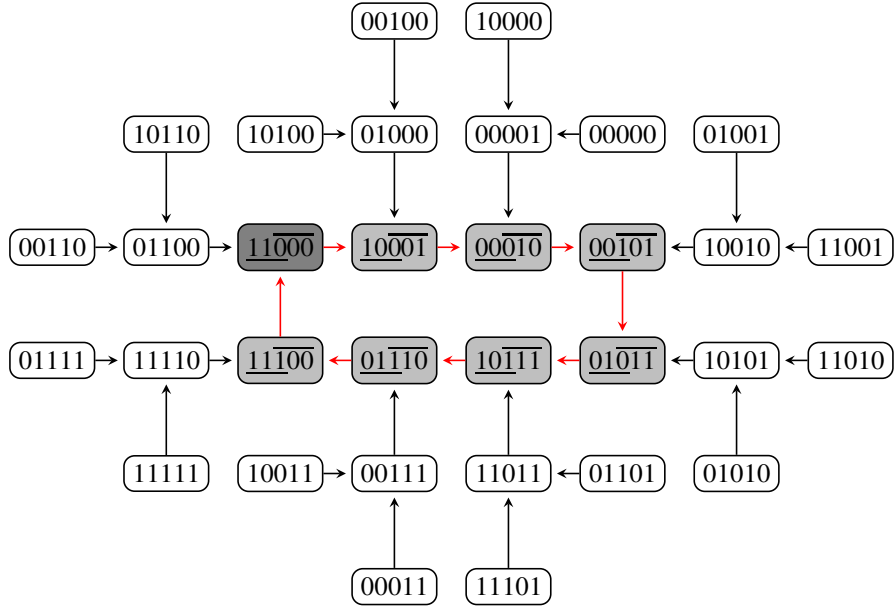

\begin{table}
	\caption{Order of appearance in a run of the GPO algorithm.}
	\label{table:order}
	\renewcommand{\arraystretch}{1.1}
	\centering
	\small
	\begin{tabular}{c c | c c | c c | c c}
		\hline
		\multicolumn{8}{c}{$\bS_5$ with $\bb=\bu_0=11000$} \\
		\hline
		No. & State & No. & State & No. & State & No. & State \\
		\hline
		$1$ & $\bu_0$ & $9$ & $10100 \in T_1$ & $17$ & $11011 \in T_5$ & $25$ & $\bu_2$\\
		
		$2$ & $\widehat{\bu_1} \in T_2$ & $10$ & $01001 \in T_3$ & $18$ & $\widehat{\bu_5} \in T_0$ & $26$ & $\bu_3$\\
		
		$3$ & $\0^5 \in T_2$ & $11$ & $10011 \in T_6$ & $19$ & $01100 \in T_0$ & $27$ & $\widehat{\bu_4} \in T_4$\\
		
		$4$ & $\0^41 \in T_2$ & $12$ & $00111 \in T_6$ & $20$ & $\widehat{\bu_0} \in T_3$ & $28$ & $10101 \in T_4$\\
		
		$5$ & $\widehat{\bu_2} \in T_6$ & $13$ & $\widehat{\bu_6} \in T_7$ & $21$ & $10010 \in T_3$ & $29$ & $\bu_4$\\
		
		$6$ & $00110 \in T_0$ & $14$ & $\1^5 \in T_7$ & $22$ & $\widehat{\bu_3} \in T_1$ & $30$ & $\bu_5$\\
		
		$7$ & $01101 \in T_5$ & $15$ & $\1^40 \in T_7$ & $23$ & $01000 \in T_1$ & $31$ & $\bu_6$\\
		
		$8$ & $11010 \in T_4$ & $16$ & $\widehat{\bu_7} \in T_5$ & $24$ & $\bu_1 $ & $32$ & $\bu_7$\\
		%\multicolumn{8}{c}{$\bS'_5$ with $\bb=\bu_3=00101$} \\
		%\hline
		%$1$ & $\bu_3$ & $9$ & $10101 \in T_4$ & $17$ & $\widehat{\bu_1} \in T_2$ & $25$ & $\widehat{\bu_7} \in T_5$\\
		%
		%$2$ & $\widehat{\bu_4} \in T_4$ & $10$ & $\bu_4$ & $18$ & $\0^5 \in T_2$ & $26$ & $11011 \in T_5$\\
		%
		%$3$ & $10100 \in T_1$ & $11$ & $\widehat{\bu_5} \in T_0$ & $19$ & $\0^4 1 \in T_2$ & $27$ & $\bu_5$\\
		%
		%$4$ & $01001 \in T_3$ & $12$ & $01100 \in T_0$ & $20$ & $\widehat{\bu_2} \in T_6$ & $28$ & $\bu_6$\\
		%
		%$5$ & $10011 \in T_6$ & $13$ & $\widehat{\bu_0} \in T_3$ & $21$ & $00111 \in T_6$ & $29$ & $\bu_7$\\
		%
		%$6$ & $00110 \in T_0$ & $14$ & $10010 \in T_3$ & $22$ & $\widehat{\bu_6} \in T_7$ & $30$ & $\bu_0$\\
		%
		%$7$ & $01101 \in T_5$ & $15$ & $\widehat{\bu_3} \in T_1$ & $23$ & $\1^5 \in T_7$ & $31$ & $\bu_1$\\
		%
		%$8$ & $11010 \in T_4$ & $16$ & $01000 \in T_1$ & $24$ & $\1^4 0 \in T_7 $ & $32$ & $\bu_2$\\
		\hline
	\end{tabular}
\end{table}

For fixed $n$ and $m$, with $n \geq m$ and $j \in \bbra{2^m-1}$, let $\bu_j$ denote the $2^m$ consecutive states that form a cycle $\cC_f$ in $\cG_f$. These states are typeset in gray in Figures~\ref{fig:M3} to~\ref{fig:M5}. As $j$ progresses from $0$ to $2^m-1$, we follow the directed edges to traverse all of the states clockwise from top left. For example, the state $\bu_2$ as shown in Figure~\ref{fig:M5} is $00010$. We underline the first $m=3$ bits of the states in $\cC_{f_4}$ and $\cC_{f_5}$ and put a line over their last $m=3$ bits.

For a fixed $n > m$, let $\cR_f$ be the graph obtained by removing the edges in $\cC_f$ from the state graph $\cG_f$. Note that $\cR_f$ has $2^m$ disjoint trees as components. Let $T_j$ denote the tree in $\cR_f$ that contains $\bu_{j}$ as its root, for $j \in \bbra{2^m-1}$. In Figure~\ref{fig:M5}, for instance, the vertex set of $T_0$ is $\{10110,00110,01100,11000\}$.

The algorithm, on input $f_4$ and initial state $\bu_0=1100$, yields $\bS_4:=(1100 ~1101~ 0000~ 1011)$. Note that the states in $\cC_{f_4} \subset \cG_{f_4}$ occur in the exact same order as they do in $\bS_4$. In one period the states appear in the order
\[
\bu_0 ,\, \widehat{\bu_1} ,\, \widehat{\bu_3} ,\, \widehat{\bu_6} ,\, \widehat{\bu_0} ,\, \widehat{\bu_5} ,\, \widehat{\bu_4} ,\, \bu_1 ,\, \widehat{\bu_2} ,\, \bu_2 ,\, \bu_3 ,\, \bu_4 ,\, \bu_5 ,\, \bu_6 ,\, \widehat{\bu_7} ,\, \bu_7.
\]

The same holds for $n=5$ with initial state $\bu_0=11000$. The states in $\cC_{f_5} \subset \cG_{f_5}$ follow the same order of appearance as they do in the resulting de Bruijn sequence 
\[
\bS_5:=(11000001~10100111~11011001~00010101).
\] 
%For comparison, let us consider the order of appearance on the de Bruijn sequence $\bS'_5:=(00101001~10101100~10000011~11101110)$ produced by using the initial state $\bu_3=00101$. 
The states appear in the order listed in Table~\ref{table:order}. 
%The algorithm produces distinct de Bruijn sequences when the initial states are $\bu_0$ and $\bu_3$, respectively.

Let $\bm_0,\bm_1,\ldots,\bm_{2^m-1}$ be the consecutive states of $\bS_m$. There is a natural bijection between $\bu_k$ and $\bm_k$ for $k \in \bbra{2^m-1}$ given by $\bu_k:=a,b,c,d,e \longleftrightarrow \bm_k:=a,b,c$. Hence, $\bu_0 = 11000 \longleftrightarrow \bm_0=110$ and so on. Slightly abusing the notation, we use $T_{\widehat{\bm_k}}$ to denote the tree in $\cR_f$ whose root is $T_{\widetilde{\bu_k}}$ with $\widehat{\bm_k}$ being the first $m=3$ bits of $\widetilde{\bu_k}$.

Let $V_{\ell}:=\{\bu_{\ell-1}\} \cup \{\bv \in T_{\ell} : \bv \neq \bu_{\ell}\}$, for $\ell \in \bbra{i+1,j}$, be the set that contains $\bu_{\ell-1}$ and the vertices of $T_{\ell}$ except for its root $\bu_{\ell}$. Notice that elements in $V_{\ell}$ always come in pairs as conjugate states. The first two states are conjugate as are the last two states. Their last $m=3$ bits are $\bm_{\ell}$ and $\bm_{\ell+1}$. This fact follows from how $f$ is defined in equation (\ref{equ-db}) where $h$ is modified by shifting the focus to the last $m$ entries, instead of the first $m$ entries. In Figure~\ref{fig:M5} we have
\begin{align*}
V_1 &=\{10100,00100,01000,11000\} \rightarrow \bm_1=100,\bm_2=000, \\
V_2 &=\{10001,00001,10000,00000\} \rightarrow \bm_2=000,\bm_3=001, \\
V_3 &=\{00010,10010,01001,11001\} \rightarrow \bm_3=001,\bm_4=010.
\end{align*}
One of the two states in each conjugate pair belonging to $V_{\ell}$ has a successor which is a leaf in either $T_{\widehat{\bm_{\ell}}}:=T_{\widetilde{\bu_{\ell}}}$ or $T_{\widehat{\bm_{\ell+1}}}:=T_{\widetilde{\bu_{\ell+1}}}$.
%[The successor of the other state of the pair in not a leaf.]
The successor of $10100$ is $01001$, which is a leaf in $T_{\widehat{000}}=T_{\widetilde{\bu_2}}=T_3$. The successor of $11000$ is $10000$, which is a leaf in $T_{\widehat{001}}=T_{\widetilde{\bu_3}}=T_2$.

We will use the enumeration of the trees that contribute some leaves to a well-chosen subsequence of a de Bruijn sequence as a tool in the proof of the next theorem. Figure~\ref{fig:M4} will be useful to confirm the only two cases where a collision in the output occurs.

\begin{proofof}{Theorem~\ref{thm:distinct}}
Let $n > m \geq 2$ be fixed. Let $\bS_m$ be the de Bruijn sequence of order $m$ produced by an $FSR$ with feedback function $h$. Let $f$ be defined based on $h$ as in Equation (\ref{equ-db}). From the state graph $\cG_f$ we let $\cC_f$ and $\cR_f$ be the subgraphs defined earlier. Arithmetic operations on the indices are taken modulo $2^m$ in this proof.
	
Let $\bu_j : j \in \bbra{2^m-1}$ be the \emph{consecutive states} of length $n$ in the directed cycle $\cC_f \subset \cG_f$. Choose one of the vertices, which are in a one-to-one correspondence with the $m$-stage states of $\bS_m$, arbitrarily as $\bu_0$. Recall that $T_j$ is the largest tree in $\cR_f$ whose root is $\bu_{j}$. The vertex sets of the trees $T_j : j \in \bbra{2^m-1}$ partition the set of vertices of $\cG_f$.
	
We choose an arbitrary index $i \in \bbra{2^m-1}$ and let $\bS_n^i$ be the de Bruijn sequence generated by the algorithm on initial state $\bu_i$. Lemma \ref{lemma1} implies that, among the states $\bu_0, \bu_1, \ldots, \bu_{2^m-1}$, the second state that the algorithm visits must be $\bu_{i+1}$. For a contradiction, let $\bu_k$ with $k > i+1$ be the second state visited. Then,  both of its children in $\cG_f$ must have already been visited. This is impossible since one of the children, which is $\bu_{k-1} \neq \bu_i$, has not been visited yet. Thus, by their order of appearance in $\bS_n$, the states are
\[
\bu_i, \bu_{i+1},  \ldots, \bu_{2^m-1}, \bu_0, \ldots, \bu_{i-1}.
\]
	
Combining this fact and Lemma \ref{lemma1} makes it clear that by the time $\bu_j$ is visited, \underline{all} states in $T_{\ell}$ for all $\ell \in \bbra{i+1,j}$ must have been visited. So in the remaining run of the algorithm, each state visited after $\bu_j$ must belong only to $T_{t}$ for \underline{some} $t \in \bbra{j+1,i}$. Similarly, suppose that we run the algorithm with initial state $\bu_j$ for some $j \neq i$ to generate $\bS^j_n$. By the time $\bu_i$ is visited, all states in $T_{\ell}$ for $\ell \in \bbra{j+1,i}$ must have been visited. Each of the remaining states to visit belongs only to $T_{s}$ for some $s \in \bbra{i+1,j}$.
	
We now examine what may allow $\bS_n^i = \bS_n^j$ while $i \neq j$. Without lost of generality, let $i < j$. If $\bS_n^i = \bS_n^j$, then we can partition the set of consecutive states visited by the algorithm on initial state $\bu_i$ into two parts. Part $A$ contains the sequence of states starting from the successor of $\bu_i$ in $\bS^i_n$ up until the state $\bu_j$. This includes all states in $T_{\ell}$ for $\ell \in \bbra{i+1,j}$. Part $B$ hosts the sequence of states starting from the successor of $\bu_j$ in $\bS_n$ until the state $\bu_i$. This includes all states in $T_{\ell}$ for $\ell \in \bbra{j+1,i}$. We say that the two parts are {\it self-closed} since all of the successors of each state in Part $A$, except for $\bu_j$, are also contained in Part $A$. Similarly with Part $B$.
	
Going back to $\cG_f$, we consider each state $\ba$ in Part $A$, except for $\bu_j$, and check if the successor of $\ba$ is a leaf. If yes, then we identify the corresponding tree in $\cR_f$ by its root. Because Part $A$ is self-closed, we claim that all trees corresponding to elements in Part $A$ must also be in Part $A$. To confirm this claim, we use the bijection between $\bu_{\ell}$ and $\bm_{\ell}$ for $\ell \in \bbra{2^m-1}$ to associate
\begin{equation}\label{eq:assoc}
\bu_{\ell}:=u_0,u_1,\ldots,u_{m-1}, \ldots,u_{n-1} \longleftrightarrow \bm_{\ell}:=u_0,u_1,\ldots,u_{m-1}.
\end{equation}
and let $V_{\ell}:=\{\bu_{\ell-1}\} \cup \{\bv \in T_{\ell} : \bv \neq \bu_{\ell}\}$. This allows us to identify $T_{\ell}:=T_{\bu_{\ell}}$ as $T_{\bm_{\ell}}$ and vice versa.
	
The algorithm's assignment rule requires that the respective successors of the conjugate states $\bv:=v_0,v_1,\ldots,v_{n-1}$ and $\overline{\bv}:=v_0+1, v_1, \ldots, v_{n-1}$ must be either one of the two companion states $\bw:=v_1,\ldots,v_{n-1},h(v_{n-m},v_{n-m+1},\ldots,v_{n-1})$ and $\widehat{\bw}:= v_1,\ldots,v_{n-1},h(v_{n-m},v_{n-m+1},\ldots,v_{n-1})+1$. Notice that $\widehat{\bw}$ must be a leaf in the tree whose root has, as its first $m$ bits, 
\[
v_{n-m+1}, \ldots, v_{n-1}, h(v_{n-m},v_{n-m+1},\ldots,v_{n-1})+1.
\]
The latter is the companion state of the child of $(v_{n-m},\ldots,v_{n-1})$ in $\cG_h$.
	
Let a pair of conjugate states $\bv$ and $\overline{\bv}$ whose common last $m$ bits is $\bm_{\ell}$ be given. One of their two possible successors must be a leaf in $T_{\widehat{\bm_{\ell+1}}}$ whose root is the companion state $\widehat{\bm_{\ell+1}}$ of $\bm_{\ell+1}$. We generalize this observation to vertices in $V_{i+1}=\{\bu_i\} \cup \{\bv \in T_{i+1}: \bv \neq \bu_{i+1} \}$. All of the states comes in conjugate pairs whose respective last $m$ bits are $\bm_{i+1}, \bm_{i+2}, \ldots, \bm_{i+n-m}$. Each pair has a state whose successor is a leaf in $\cG_f$. Enumerating the corresponding trees, we obtain
\[
T_{\widehat{\bm_{i+2}}}, T_{\widehat{\bm_{i+3}}}, \ldots, T_{\widehat{\bm_{i+1+n-m}}}.
\]
Hence, going through each conjugate pair in $\displaystyle{\cup_{\ell \in \bbra{i+1,j}} V_{\ell}}$ and identifying the tree that contains a successor which is a leaf gives us the following list of trees:
\[
\bigcup_{\ell \in \bbra{i+1,j}} \{T_{\widehat{\bm_{\ell+1}}}, T_{\widehat{\bm_{\ell+2}}}, \ldots, T_{\widehat{\bm_{\ell+n-m}}}\}.
\]
	
Part $A$ must then have the following property. Because it is self-closed, it contains not only all of the states of the trees $T_{\ell}$ for $\ell \in \bbra{i+1,j}$, but also all states of the trees with respective roots $\displaystyle{\widehat{\bm_{i+2}}, \widehat{\bm_{i+3}},\ldots, \widehat{\bm_{j+n-m}}}$. In total, the states in Part $A$ come from
\[
j+n-m-(i+2)+1=j-i+(n-m-1)
\]
trees. The assumption that $\bS_n^i = \bS_n^j$, however, implies that there are $j-i$ distinct trees that contribute their states to Part $A$. This is impossible if $n > m+1$. Thus, whenever $n > m+1$, distinct initial states $\bu_i$ and $\bu_j$ generate distinct de Bruijn sequences.
	
Now, let $n=m+1$ and $\bu_i$ be the initial state. Suppose that Part $A$ contains states contributed by only one tree and let $\bu_i:=u_0,u_1,\ldots,u_{n-1}$. Then $\bu_{i+1}:=u_1,\ldots,u_{n-1},u_n$ and the only relevant leaf (in $T_{i+1}$) is $\overline{\bu_i}:=u_0+1,u_1,\ldots,u_{n-1}$, which is the successor of $\bu_i$ in $\bS_n^i$. Thus, we have $u_1,u_2,\ldots,u_{n-1},u_n+1=u_0+1,u_1,\ldots,u_{n-1}$, $\ie$,
\[
u_0\neq u_1=u_2=\ldots=u_{n-1}\neq u_n.
\]
There are only two cases that satisfy this constraint.
\begin{enumerate}
\item Part $A$ includes $T_{\0^m1}$, containing $\0^{m+1}$ and $\0^{m}1$, while Part $B$ contains all of the other trees.
\item Part $A$ consists of $T_{\1^m0}$, containing $\1^{m+1}$ and $\1^{m}0$, while Part $B$ contains all of the other trees.
\end{enumerate}
Starting with $\bb=\0^m1$, the last state of the generated de Bruijn sequence must be $\0^{m+1}$. If we begin with $\bb=1\0^m$, then the second and third states of the resulting de Bruijn sequence must be $\0^{m+1}$ and $\0^m1$, respectively. These two de Bruijn sequences are shift equivalent. A similar argument can be made for the initial states $\1^m0$ and $0\1^m$.
	
If Part $A$ fully contains the trees $T_i$ and $T_{i+1}$, then the above analysis confirms that it also contains leaves belonging to the trees $T_{\widehat{\bm_{i+1}}}$ and $T_{\widehat{\bm_{i+2}}}$. Because $\bm_{i+1} \neq \widehat{\bm_{i+1}}$ we have
\[
\bm_i=\widehat{\bm_{i+1}} \mbox{ and } \bm_{i+1} = \widehat{\bm_{i+2}} \implies \bm_i = \bm_{i+2},
\]
which is impossible since $\bS_m$ is de Bruijn. One can proceed inductively to come to the same conclusion for the cases where there are more than $2$ trees that contribute their states to Part $A$. This completes the proof. \qed
\end{proofof}

%%%
\begin{table}[h]
	\caption{GPO List for $n = 4$; Part One of Two}
	\label{table:GPO4}
	\setlength{\tabcolsep}{0.1cm}
	\centering
	\scriptsize
	\begin{tabular}{c  c c | c c}
		\hline
		No. & $f$ for Sequence $(0000~1111~0110~0101)$ & $\bb \in $ & $f$ for Sequence $(0000~1011~1101~0011)$ & $\bb \in $\\
		\hline
		$1$ & $x_1 \cdot x_2 \cdot x_3 + x_1 \cdot x_2 + x_1 \cdot x_3 + x_1 + x_2 \cdot x_3 + x_2 + x_3 + 1 $ & $\{1000, 0001\}$
		& $x_1 \cdot x_2 \cdot x_3 + x_1 \cdot x_2 + x_1 \cdot x_3 + x_1 + x_2 \cdot x_3 + x_2 + x_3 + 1$ & $\{0010\}$\\
		
		$2$ & $x_1 \cdot x_2 \cdot x_3 + x_1 \cdot x_2 + x_1 \cdot x_3 + x_1 + x_2 + 1$ & $\{1100\}$
		& $x_1 \cdot x_2 \cdot x_3 + x_1 \cdot x_3 + x_1 + x_2 + x_3 + 1 $ & $\{0100\}$\\
		
		$3$ & $x_1 \cdot x_2 \cdot x_3 + x_1 \cdot x_2 + x_1 + x_2 \cdot x_3 + x_2 + 1$ & $\{0111,1110\}$
		& $x_1 \cdot x_2 + x_1 \cdot x_3 + x_1 + x_2 \cdot x_3 + x_3 + 1 $ & $\{1010\}$\\

		$4$ & $x_1 \cdot x_2 \cdot x_3 + x_1 \cdot x_3 + x_1 + x_2 \cdot x_3 + x_2 + 1$ & $\{1011\}$
		& $x_1 \cdot x_2 \cdot x_3 + x_1 \cdot x_2 + x_1 + x_2 \cdot x_3 + x_3 + 1 $ & $\{1101\}$\\
		
		$5$ & $x_1 \cdot x_2 \cdot x_3 + x_1 \cdot x_3 + x_2 \cdot x_3 + x_2 + x_3 + 1$ & $\{0010\}$
		& $x_1 \cdot x_2 \cdot x_3 + x_1 \cdot x_2 + x_1 \cdot x_3 + x_2 \cdot x_3 + x_2 + 1 $ & $\{0011\}$\\
		
		$6$ & $x_1 \cdot x_2 + x_1 + x_2 \cdot x_3 + x_2 + 1$ & $\{1111\}$
		& $x_1 \cdot x_2 \cdot x_3 + x_1 + x_2 \cdot x_3 + x_3 + 1 $ & $\{1111\}$\\
		
		$7$ & $x_1 \cdot x_2 + x_1 + x_2 + 1$ & $\{0011\}$
		& $x_1 \cdot x_2 + x_1 + x_2 + 1 $ & $\{1000,0001\}$\\
		
		$8$ & $x_1 \cdot x_3 + x_1 + x_2 + 1$ & $\{0110\}$
		& $x_1 \cdot x_3 + x_1 + x_3 + 1 $ & $\{0101\}$\\
		
		$9$ & $x_1 + x_2 \cdot x_3 + x_2 + 1$ & $\{1101\}$
		& $x_1 + x_2 \cdot x_3 + x_3 + 1 $ & $\{0111, 1110\}$\\
		
		$10$ & $x_2 \cdot x_3 + x_2 + x_3 + 1$ & $\{0100\}$
		& $x_1 \cdot x_2 \cdot x_3 + x_1 + x_3 + 1 $ & $\{1011\}$\\
		
		$11$ & $x_1 \cdot x_2 + x_1 \cdot x_3 + x_3 + 1$ & $\{0101\}$
		& $x_1 \cdot x_2 + x_2 + x_3 + 1 $ & $\{1001\}$\\
		
		$12$ & $x_1 \cdot x_2 \cdot x_3 + x_1 \cdot x_2 + x_3 + 1$ & $\{1010\}$
		& $x_1 \cdot x_2 + x_1 \cdot x_3 + x_2 + 1 $ & $\{0110\}$\\
		
		$13$ & $x_2 + 1$ & $\{0011\}$ & $x_1 \cdot x_2 \cdot x_3 + x_1 \cdot x_3 + x_2 + 1 $ & $\{1100\}$\\
		
		$14$ & $0$ & $\{0000\}$ & $x_1 \cdot x_2 \cdot x_3 + x_1 \cdot x_3 + x_2 \cdot x_3 + x_3 $ & $\{0000\}$\\
		\hline
		& $f$ for Sequence $(0000~1010~0111~1011)$ & & $f$ for Sequence $(0000~1111~0101~1001)$ & \\
		\hline
		$1$ & $x_1 \cdot x_2 \cdot x_3 + x_1 \cdot x_2 + x_1 \cdot x_3 + x_1 + x_2 \cdot x_3 + x_2 + x_3 + 1$ & $\{0100\}$ & $x_1 \cdot x_2 \cdot x_3 + x_1 \cdot x_2 + x_1 \cdot x_3 + x_1 + x_2 \cdot x_3 + x_2 + 1$ & $\{1111\}$ \\
		
		$2$ & $x_1 \cdot x_2 + x_1 + x_2 \cdot x_3 + x_2 + x_3 + 1 $ & $\{0010\}$ &
		$x_1 \cdot x_2 + x_1 + x_2 \cdot x_3 + x_2 + x_3 + 1 $ & $\{1000, 0001\}$ \\
		
		$3$ & $x_1 \cdot x_2 \cdot x_3 + x_1 \cdot x_2 + x_1 \cdot x_3 + x_1 + x_2 + 1  $ & $\{1000, 0001\}$ & $x_1 \cdot x_2 + x_1 \cdot x_3 + x_1 + x_2 \cdot x_3 + x_2 + 1 $ & $\{0111, 1110\}$ \\
		
		$4$ & $x_1 \cdot x_2 \cdot x_3 + x_1 \cdot x_2 + x_1 \cdot x_3 + x_2 \cdot x_3 + x_2 + 1 $ & $\{1101\}$ & $x_1 \cdot x_2 \cdot x_3 + x_1 \cdot x_2 + x_1 \cdot x_3 + x_1 + x_2 + 1 $ & $\{0011\}$ \\
		
		$5$ & $x_1 \cdot x_2 \cdot x_3 + x_1 \cdot x_3 + x_2 \cdot x_3 + x_2 + 1 $ & $\{1111\}$ & $x_1 \cdot x_2 \cdot x_3 + x_1 \cdot x_3 + x_1 + x_2 \cdot x_3 + x_2 + 1 $ & $\{1101\}$ \\
		
		$6$ & $x_1 \cdot x_3 + x_1 + x_3 + 1 $ & $\{1010\}$ & $x_1 \cdot x_2 \cdot x_3 + x_1 \cdot x_2 + x_1 \cdot x_3 + x_1 + x_2 \cdot x_3 + 1 $ & $\{0110\}$ \\
		
		$7$ & $x_1 \cdot x_2 \cdot x_3 + x_1 + x_3 + 1 $ & $\{0101\}$ &
		$x_1 \cdot x_2 \cdot x_3 + x_1 \cdot x_3 + x_2 \cdot x_3 + x_2 + x_3 + 1 $ & $\{0100\}$ \\
		
		$8$ & $x_2 \cdot x_3 + x_2 + x_3 + 1 $ & $\{1001\}$ & $x_1 + x_2 \cdot x_3 + x_2 + 1 $ & $\{1010\}$ \\
		
		$9$ & $x_1 \cdot x_2 + x_2 \cdot x_3 + x_2 + 1 $ & $\{1011\}$ & $x_1 \cdot x_2 + x_1 \cdot x_3 + x_1 + 1  $ & $\{1011\}$ \\
		
		$10$ & $x_1 \cdot x_2 \cdot x_3 + x_1 \cdot x_2 + x_2 + 1 $ & $\{0110\}$ & $x_1 \cdot x_2 \cdot x_3 + x_1 \cdot x_2 + x_1 + 1  $ & $\{0101\}$ \\
		
		$11$ & $x_1 \cdot x_3 + x_2 \cdot x_3 + x_2 + 1 $ & $\{0111, 1110\}$ & $x_1 \cdot x_3 + x_1 + x_2 \cdot x_3 + 1 $ & $\{1100\}$ \\
		
		$12$ & $x_1 \cdot x_2 \cdot x_3 + x_1 \cdot x_3 + x_2 + 1 $ & $\{0011\}$ & $x_1 \cdot x_2 + x_1 \cdot x_3 + x_3 + 1 $ & $\{0010\}$ \\
		
		$13$ & $x_2 + 1  $ & $\{1100\}$ & $x_1 \cdot x_2 \cdot x_3 + x_1 \cdot x_2 + x_2 \cdot x_3 + 1 $ & $\{1001\}$ \\
		
		$14$ & $x_2 \cdot x_3 + x_3  $ & $\{0000\}$ & $x_1 \cdot x_2 \cdot x_3 + x_1 \cdot x_3 $ & $\{0000\}$ \\
		
		\hline
		& $f$ for Sequence $(0000~1101~1110~0101)$ & & $f$ for Sequence $(0000~1011~0100~1111)$ &\\
		\hline
		$1$ & $x_1 \cdot x_2 + x_1 \cdot x_3 + x_1 + x_2 + x_3 + 1$ & $\{1000, 0001\}$ &
		$x_1 \cdot x_2 + x_1 \cdot x_3 + x_1 + x_2 + x_3 + 1 $ & $\{0010\}$ \\
		
		$2$ & $x_1 \cdot x_2 + x_1 \cdot x_3 + x_1 + x_2 \cdot x_3 + x_2 + 1 $ & $\{1100\}$ & $x_1 \cdot x_3 + x_1 + x_2 \cdot x_3 + x_2 + x_3 + 1 $ & $\{0100\}$ \\
		
		$3$ & $x_1 \cdot x_2 \cdot x_3 + x_1 \cdot x_2 + x_1 + x_2 \cdot x_3 + x_2 + 1 $ & $\{0011\}$ & $x_1 \cdot x_2 \cdot x_3 + x_1 \cdot x_2 + x_1 + x_2 \cdot x_3 + x_2 + 1 $ & $\{1000, 0001\}$ \\
		
		$4$ & $x_1 \cdot x_2 \cdot x_3 + x_1 \cdot x_3 + x_1 + x_2 \cdot x_3 + x_2 + 1 $ & $\{0111, 1110\}$ & $x_1 \cdot x_2 \cdot x_3 + x_1 \cdot x_2 + x_1 \cdot x_3 + x_1 + x_3 + 1 $ & $\{1010\}$ \\
		
		$5$ & $x_1 \cdot x_2 \cdot x_3 + x_1 \cdot x_2 + x_1 \cdot x_3 + x_2 \cdot x_3 + x_3 + 1 $ & $\{0101\}$ & $x_1 \cdot x_2 \cdot x_3 + x_1 \cdot x_3 + x_1 + x_2 \cdot x_3 + x_3 + 1 $ & $\{0101\}$ \\
		
		$6$ & $x_1 \cdot x_3 + x_1 + x_2 \cdot x_3 + x_2 + 1 $ & $\{1111\}$ & $x_1 \cdot x_2 \cdot x_3 + x_1 \cdot x_2 + x_2 \cdot x_3 + x_2 + x_3 + 1  $ & $\{1001\}$ \\
		
		$7$ & $x_1 \cdot x_2 + x_1 + x_2 + 1  $ & $\{0110\}$ & $x_1 \cdot x_2 \cdot x_3 + x_1 \cdot x_2 + x_1 \cdot x_3 + x_2 \cdot x_3 + x_2 + 1 $ & $\{0111, 1110\}$ \\
		
		$8$ & $x_1 \cdot x_3 + x_1 + x_2 + 1  $ & $\{1011\}$ & $x_1 \cdot x_2 + x_1 \cdot x_3 + x_2 \cdot x_3 + x_2 + 1  $ & $\{1111\}$ \\
		
		$9$ & $x_1 \cdot x_2 \cdot x_3 + x_1 + x_2 + 1 $ & $\{1101\}$ & $x_1 \cdot x_2 + x_1 + x_3 + 1 $ & $\{1101\}$ \\
		
		$10$ & $x_1 \cdot x_3 + x_2 + x_3 + 1  $ & $\{0010\}$ & $x_1 + x_2 \cdot x_3 + x_3 + 1  $ & $\{1011\}$ \\
		
		$11$ & $x_1 \cdot x_2 \cdot x_3 + x_2 + x_3 + 1  $ & $\{0100\}$ & $x_1 \cdot x_2 \cdot x_3 + x_1 + x_3 + 1 $ & $\{0110\}$ \\
		
		$12$ & $x_1 \cdot x_2 \cdot x_3 + x_2 \cdot x_3 + x_2 + 1  $ & $\{1001\}$ & $x_1 \cdot x_2 + x_1 \cdot x_3 + x_2 + 1 $ & $\{0011\}$ \\
		
		$13$ & $x_1 \cdot x_2 + x_2 \cdot x_3 + x_3 + 1  $ & $\{1010\}$ & $x_1 \cdot x_3 + x_2 \cdot x_3 + x_2 + 1 $ & $\{1100\}$ \\
		
		$14$ & $x_1 \cdot x_2 \cdot x_3 + x_2 \cdot x_3 $ & $\{0000\}$ & $x_1 \cdot x_3 + x_3  $ & $\{0000\}$ \\
		
		\hline
		& $f$ for Sequence $(0000~1010~0110~1111)$ & & $f$ for Sequence $(0000~1001~1110~1011)$ &\\
		\hline
		$1$ & $x_1 \cdot x_2 + x_1 \cdot x_3 + x_1 + x_2 + x_3 + 1 $ & $\{0100\}$
		& $x_1 \cdot x_2 + x_1 + x_2 \cdot x_3 + x_2 + x_3 + 1$ & $\{0100\}$ \\
		
		$2$ & $x_1 \cdot x_2 \cdot x_3 + x_1 \cdot x_2 + x_1 + x_2 + x_3 + 1 $ & $\{0010\}$ & $x_1 \cdot x_2 \cdot x_3 + x_1 \cdot x_3 + x_2 \cdot x_3 + x_2 + x_3 + 1 $ & $\{1001\}$ \\
		
		$3$ & $x_1 \cdot x_2 + x_1 \cdot x_3 + x_1 + x_2 \cdot x_3 + x_2 + 1  $ & $\{1000, 0001\}$ & $x_1 \cdot x_2 \cdot x_3 + x_1 \cdot x_2 + x_1 \cdot x_3 + x_2 \cdot x_3 + x_2 + 1 $ & $\{1010\}$ \\
		
		$4$ & $x_1 \cdot x_2 \cdot x_3 + x_1 \cdot x_3 + x_1 + x_2 \cdot x_3 + x_3 + 1 $ & $\{1010 \}$ & $x_1 \cdot x_2 \cdot x_3 + x_1 + x_3 + 1 $ & $\{0010\}$ \\
		
		$5$ & $x_1 \cdot x_2 \cdot x_3 + x_1 \cdot x_2 + x_2 \cdot x_3 + x_2 + 1 $ & $\{1111\}$ & $x_1 \cdot x_3 + x_1 + x_2 \cdot x_3 + 1 $ & $\{1000, 0001\}$ \\
		
		$6$ & $x_1 + x_2 \cdot x_3 + x_3 + 1 $ & $\{0101\}$ & $x_1 \cdot x_2 + x_2 \cdot x_3 + x_2 + 1 $ & $\{1101\}$ \\
		
		$7$ & $x_1 \cdot x_2 \cdot x_3 + x_2 + x_3 + 1 $ & $\{1001\}$ & $x_1 \cdot x_2 \cdot x_3 + x_2 \cdot x_3 + x_2 + 1 $ & $\{0111, 1110\}$ \\
		
		$8$ & $x_1 \cdot x_2 + x_1 \cdot x_3 + x_2 + 1 $ & $\{1101\}$ & $x_1 \cdot x_2 \cdot x_3 + x_1 \cdot x_2 + x_2 \cdot x_3 + 1  $ & $\{1100\}$ \\
		
		$9$ & $x_1 \cdot x_2 + x_2 \cdot x_3 + x_2 + 1 $ & $\{0111, 1110\}$ & $x_1 \cdot x_2 \cdot x_3 + x_1 \cdot x_2 + x_2 + x_3  $ & $\{0000\}$ \\
		
		$10$ & $x_1 \cdot x_2 \cdot x_3 + x_1 \cdot x_2 + x_2 + 1 $ & $\{1011\}$ & $x_2 \cdot x_3 + x_2 + 1  $ & $\{1111\}$ \\
		
		$11$ & $x_1 \cdot x_3 + x_2 \cdot x_3 + x_2 + 1 $ & $\{0011\}$ & $x_2 + 1 $ & $\{0011\}$ \\
		
		$12$ & $x_1 \cdot x_2 \cdot x_3 + x_1 \cdot x_3 + x_2 + 1 $ & $\{0110\}$ & $x_1 \cdot x_3 + 1  $ & $\{0101\}$ \\
		
		$13$ & $x_1 \cdot x_2 \cdot x_3 + x_2 \cdot x_3 + x_2 + 1 $ & $\{1100\}$ & $x_2 \cdot x_3 + 1  $ & $\{0110\}$ \\
		
		$14$ & $x_1 \cdot x_2 \cdot x_3 + x_3 $ & $\{0000\}$ & $x_1 \cdot x_2 \cdot x_3 + 1 $ & $\{1011\}$ \\
		
		\hline
	\end{tabular}
\end{table}

%%% Table Two of Two
\begin{table}[h]
	\caption{GPO List for $n = 4$; Part Two of Two}
	\label{table:GPO4part2}
	\centering
	\scriptsize
	\begin{tabular}{c  c c | c c}
		\hline
		No. & $f$ for Sequence $(0000~1101~0111~1001)$  & $\bb \in $ & $f$ for Sequence $(0000~1001~1010~1111)$ & $\bb \in $\\
		\hline
		$1$ & $x_1 \cdot x_2 \cdot x_3 + x_1 \cdot x_2 + x_1 + x_2 + x_3 + 1$ & $\{1000, 0001\}$ & $x_1 \cdot x_2 \cdot x_3 + x_1 \cdot x_2 + x_1 + x_2 + x_3 + 1$ & $\{0100\}$ \\
		
		$2$ & $x_1 \cdot x_2 + x_1 \cdot x_3 + x_1 + x_2 \cdot x_3 + x_2 + 1 $ & $\{0011\}$ & $x_1 + x_2 \cdot x_3 + x_3 + 1 $ & $\{0010\}$ \\
		
		$3$ & $x_1 \cdot x_2 \cdot x_3 + x_1 \cdot x_2 + x_1 \cdot x_3 + x_1 + x_2 + 1 $ & $\{0110\}$ & $x_1 \cdot x_2 \cdot x_3 + x_1 \cdot x_3 + x_1 + 1 $ & $\{1000, 0001\}$ \\
		
		$4$ & $x_1 \cdot x_2 \cdot x_3 + x_1 \cdot x_2 + x_1 \cdot x_3 + x_1 + x_2 \cdot x_3 + 1 $ & $\{1011\}$ & $x_1 \cdot x_3 + x_2 + x_3 + 1  $ & $\{1001\}$ \\
		
		$5$ & $x_1 \cdot x_2 \cdot x_3 + x_1 \cdot x_2 + x_1 \cdot x_3 + x_2 \cdot x_3 + x_3 + 1 $ & $\{0010\}$ & $x_1 \cdot x_2 + x_1 \cdot x_3 + x_2 + 1  $ & $\{1010\}$ \\
		
		$6$ & $x_1 \cdot x_2 \cdot x_3 + x_1 \cdot x_2 + x_1 \cdot x_3 + x_1 + 1 $ & $\{1111\}$ & $x_1 \cdot x_2 \cdot x_3 + x_1 \cdot x_2 + x_2 + 1 $ & $\{1101\}$ \\
		
		$7$ & $x_1 \cdot x_3 + x_1 + x_2 + 1 $ & $\{1101\}$ & $x_1 \cdot x_2 \cdot x_3 + x_2 \cdot x_3 + x_2 + 1  $ & $\{0011\}$ \\
		
		$8$ & $x_1 \cdot x_2 \cdot x_3 + x_1 + x_2 + 1 $ & $\{1010\}$ & $x_1 \cdot x_2 \cdot x_3 + x_1 \cdot x_3 + x_2 \cdot x_3 + 1 $ & $\{0101, 0000\}$ \\
		
		$9$ & $x_1 \cdot x_2 + x_1 \cdot x_3 + x_1 + 1 $ & $\{0111, 1110\}$ & $x_2 + 1 $ & $\{0110, 1100\}$ \\
		
		$10$ & $x_1 \cdot x_2 + x_1 + x_2 \cdot x_3 + 1 $ & $\{0101\}$ & $x_2 \cdot x_3 + 1  $ & $\{1011\}$ \\
		
		$11$ & $x_1 \cdot x_2 \cdot x_3 + x_1 \cdot x_3 + x_1 + 1 $ & $\{1100\}$ & $x_1 \cdot x_2 \cdot x_3 + 1  $ & $\{0111, 1110\}$ \\
		
		$12$ & $x_1 \cdot x_3 + x_2 + x_3 + 1  $ & $\{0100\}$ & $1 $ & $\{1111\}$ \\
		
		$13$ & $x_1 \cdot x_2 + 1  $ & $\{1001\}$ & &  \\
		
		$14$ & $x_1 \cdot x_3 + x_2 \cdot x_3 $ & $\{0000\}$ & & \\
		\hline
		& $f$ for Sequence $(0000~1111~0010~1101)$ & & $f$ for Sequence $(0000~1011~1100~1101)$ & \\
		\hline
		$1$ & $x_1 \cdot x_3 + x_1 + x_2 \cdot x_3 + x_2 + x_3 + 1 $ & $\{1000, 0001\}$ &
		$x_1 \cdot x_3 + x_1 + x_2 \cdot x_3 + x_2 + x_3 + 1 $ & $\{0010\}$ \\
		
		$2$ & $x_1 \cdot x_2 \cdot x_3 + x_1 \cdot x_2 + x_1 + x_2 \cdot x_3 + x_2 + 1 $ & $\{1100\}$ &
		$x_1 \cdot x_2 \cdot x_3 + x_1 \cdot x_2 + x_1 \cdot x_3 + x_1 + x_3 + 1 $ & $\{0101\}$ \\
		
		$3$ & $x_1 \cdot x_2 \cdot x_3 + x_1 \cdot x_2 + x_2 \cdot x_3 + x_2 + x_3 + 1 $ & $\{0100\}$ &
		$x_1 \cdot x_2 \cdot x_3 + x_1 \cdot x_2 + x_1 + x_2 \cdot x_3 + x_3 + 1 $ & $\{0111, 1110\}$ \\
		
		$4$ & $x_1 \cdot x_2 \cdot x_3 + x_1 \cdot x_2 + x_1 \cdot x_3 + x_2 \cdot x_3 + x_3 + 1$ & $\{ 1011\}$ &
		$x_1 \cdot x_2 \cdot x_3 + x_1 \cdot x_2 + x_1 \cdot x_3 + x_2 \cdot x_3 + x_3 + 1 $ & $\{1001\}$ \\
		
		$5$ & $x_1 \cdot x_2 \cdot x_3 + x_1 + x_2 \cdot x_3 + x_2 + 1 $ & $\{1111\}$ &
		$x_1 \cdot x_2 + x_1 + x_2 \cdot x_3 + x_3 + 1  $ & $\{111\}$ \\
		
		$6$ & $x_1 + x_2 \cdot x_3 + x_2 + 1 $ & $\{0111, 1110\}$ &
		$x_1 \cdot x_2 \cdot x_3 + x_1 + x_2 + 1 $ & $\{1000, 0001\}$ \\
		
		$7$ & $x_1 \cdot x_2 \cdot x_3 + x_1 + x_2 + 1  $ & $\{0011\}$ &
		$x_1 \cdot x_2 + x_1 + x_3 + 1  $ & $\{1011\}$ \\
		
		$8$ & $x_1 \cdot x_2 \cdot x_3 + x_2 + x_3 + 1 $ & $\{0010\}$ &
		$x_1 + x_2 \cdot x_3 + x_3 + 1 $ & $\{1100\}$ \\
		
		$9$ & $x_1 \cdot x_3 + x_2 \cdot x_3 + x_2 + 1  $ & $\{1001\}$ &
		$x_1 \cdot x_2 + x_1 \cdot x_3 + x_2 + 1 $ & $\{0100\}$ \\
		
		$10$ & $x_1 \cdot x_2 + x_1 \cdot x_3 + x_3 + 1 $ & $\{0110\}$ &
		$x_1 \cdot x_2 \cdot x_3 + x_1 \cdot x_2 + x_2 \cdot x_3 + 1 $ & $\{0110\}$ \\
		
		$11$ & $x_1 \cdot x_2 + x_2 \cdot x_3 + x_3 + 1 $ & $\{0101\}$ &
		$x_1 \cdot x_2 \cdot x_3 + x_1 \cdot x_3 + x_2 \cdot x_3 + 1 $ & $\{1010\}$ \\
		
		$12$ & $x_1 \cdot x_2 \cdot x_3 + x_1 \cdot x_3 + x_3 + 1 $ & $\{1101\}$ &
		$x_1 \cdot x_2 + x_1 \cdot x_3 + x_2 \cdot x_3 + x_3 $ & $\{0000\}$ \\
		
		$13$ & $x_3 + 1 $ & $\{1010\}$ &
		$x_1 \cdot x_2 + 1   $ & $\{0011\}$ \\
		
		$14$ & $x_1 \cdot x_2 \cdot x_3 + x_1 \cdot x_2 $ & $\{0000\}$ &
		$x_2 \cdot x_3 + 1  $ & $\{1101\}$ \\
		\hline
		& $f$ for Sequence $(0000~1100~1011~1101)$ & & $f$ for Sequence $(0000~1011~0011~1101)$ & \\
		\hline
		$1$ & $x_1 \cdot x_2 \cdot x_3 + x_1 \cdot x_3 + x_1 + x_2 + x_3 + 1$ & $\{1000, 0001\}$ &
		$x_1 \cdot x_2 \cdot x_3 + x_1 \cdot x_3 + x_1 + x_2 + x_3 + 1 $ & $\{0010\}$ \\
		
		$2$ & $x_1 \cdot x_2 \cdot x_3 + x_1 \cdot x_2 + x_1 \cdot x_3 + x_2 \cdot x_3 + x_3 + 1 $ & $\{0111, 1110\}$ &
		$x_1 \cdot x_2 + x_1 \cdot x_3 + x_1 + x_2 \cdot x_3 + x_3 + 1 $ & $\{0101\}$ \\
		
		$3$ & $x_1 \cdot x_2 + x_1 \cdot x_3 + x_2 \cdot x_3 + x_3 + 1 $ & $\{1111\}$ &
		$x_1 \cdot x_2 \cdot x_3 + x_1 \cdot x_2 + x_1 + x_2 \cdot x_3 + x_3 + 1 $ & $\{1011\}$ \\
		
		$4$ & $x_1 \cdot x_2 + x_1 + x_2 + 1 $ & $\{1100\}$ &
		$x_1 \cdot x_2 \cdot x_3 + x_1 \cdot x_2 + x_1 \cdot x_3 + x_2 \cdot x_3 + x_2 + 1 $ & $\{0100\}$ \\
		
		$5$ & $x_1 + x_2 \cdot x_3 + x_2 + 1 $ & $\{0011\}$ &
		$x_1 + x_2 \cdot x_3 + x_2 + 1   $ & $\{1000, 0001\}$ \\
		
		$6$ & $x_1 \cdot x_2 \cdot x_3 + x_1 + x_2 + 1 $ & $\{0110\}$ &
		$x_1 \cdot x_2 + x_1 + x_3 + 1  $ & $\{0110\}$ \\
		
		$7$ & $x_1 \cdot x_2 + x_2 + x_3 + 1 $ & $\{0100\}$ &
		$x_1 \cdot x_2 \cdot x_3 + x_1 + x_3 + 1   $ & $\{1100\}$ \\
		
		$8$ & $x_2 \cdot x_3 + x_2 + x_3 + 1 $ & $\{0010\}$ &
		$x_1 \cdot x_2 + x_1 \cdot x_3 + x_3 + 1 $ & $\{1001\}$ \\
		
		$9$ & $x_1 \cdot x_2 \cdot x_3 + x_1 \cdot x_3 + x_2 + 1 $ & $\{1001\}$ &
		$x_1 \cdot x_2 \cdot x_3 + x_1 \cdot x_2 + x_2 \cdot x_3 + 1 $ & $\{0011\}$ \\
		
		$10$ & $x_1 \cdot x_2 + x_1 \cdot x_3 + x_3 + 1 $ & $\{1011\}$ &
		$x_1 \cdot x_2 \cdot x_3 + x_1 \cdot x_2 + x_1 \cdot x_3 + x_3 $ & $\{0000\}$ \\
		
		$11$ & $x_1 \cdot x_2 \cdot x_3 + x_1 \cdot x_2 + x_3 + 1 $ & $\{0101\}$ &
		$x_1 \cdot x_2 \cdot x_3 + x_1 \cdot x_2 + 1 $ & $\{1111\}$ \\
		
		$12$ & $x_1 \cdot x_3 + x_2 \cdot x_3 + x_3 + 1  $ & $\{1101\}$ &
		$x_1 \cdot x_2 + 1  $ & $\{0111, 1110\}$ \\
		
		$13$ & $x_1 \cdot x_2 \cdot x_3 + x_2 \cdot x_3 + x_3 + 1$ & $\{1010\}$ &
		$x_1 \cdot x_3 + 1  $ & $\{1010\}$ \\
		
		$14$ & $x_1 \cdot x_2 + x_2 \cdot x_3 $ & $\{0000\}$ &
		$x_1 \cdot x_2 \cdot x_3 + 1 $ & $\{1101\}$ \\
		\hline
		& $f$ for Sequence $(0000~1101~0010~1111)$ & & $f$ for Sequence $(0000~1111~0100~1011)$ &\\
		\hline
		$1$ & $x_1 \cdot x_2 \cdot x_3 + x_1 \cdot x_2 + x_1 \cdot x_3 + x_1 + x_2 \cdot x_3 + 1 $ & $\{1101\}$ &
		$x_1 \cdot x_2 \cdot x_3 + x_1 \cdot x_2 + x_1 \cdot x_3 + x_2 \cdot x_3 + x_2 + 1$ & $\{1001\}$ \\
		
		$2$ & $x_1 \cdot x_2 \cdot x_3 + x_1 \cdot x_2 + x_2 \cdot x_3 + x_2 + x_3 + 1$ & $\{0010\}$ &
		$x_1 + x_2 \cdot x_3 + x_2 + 1  $ & $\{0100\}$ \\
		
		$3$ & $x_1 \cdot x_2 \cdot x_3 + x_1 \cdot x_2 + x_1 \cdot x_3 + x_2 \cdot x_3 + x_3 + 1 $ & $\{1100\}$ &
		$x_1 \cdot x_2 \cdot x_3 + x_1 + x_3 + 1 $ & $\{1000, 0001\}$ \\
		
		$4$ & $x_1 \cdot x_2 \cdot x_3 + x_1 \cdot x_3 + x_2 \cdot x_3 + x_3 + 1$ & $\{1111\}$ &
		$x_1 \cdot x_2 + x_1 \cdot x_3 + x_1 + 1 $ & $\{1101\}$ \\
		
		$5$ & $x_1 \cdot x_2 \cdot x_3 + x_1 + x_2 + 1 $ & $\{0100\}$ &
		$x_1 \cdot x_2 \cdot x_3 + x_1 \cdot x_2 + x_1 + 1 $ & $\{1010\}$ \\
		
		$6$ & $x_1 + x_2 \cdot x_3 + x_3 + 1  $ & $\{1000, 0001\}$ &
		$x_1 \cdot x_3 + x_1 + x_2 \cdot x_3 + 1  $ & $\{0011\}$ \\
		
		$7$ & $x_1 \cdot x_2 + x_1 + x_2 \cdot x_3 + 1  $ & $\{1010\}$ &
		$x_1 \cdot x_2 \cdot x_3 + x_1 \cdot x_3 + x_1 + 1$ & $\{0111, 1110\}$ \\
		
		$8$ & $x_1 \cdot x_3 + x_1 + x_2 \cdot x_3 + 1 $ & $\{0110\}$ &
		$x_1 \cdot x_2 + x_2 + x_3 + 1$ & $\{0010\}$ \\
		
		$9$ & $x_1 \cdot x_2 \cdot x_3 + x_1 \cdot x_3 + x_1 + 1 $ & $\{0011\}$ &
		$x_1 \cdot x_2 + x_1 \cdot x_3 + x_3 + 1$ & $\{1100\}$ \\
		
		$10$ & $x_1 \cdot x_2 + x_1 \cdot x_3 + x_2 + 1  $ & $\{1001\}$ &
		$x_1 \cdot x_3 + x_2 \cdot x_3 + x_3 + 1$ & $\{1011\}$ \\
		
		$11$ & $x_1 \cdot x_3 + x_2 \cdot x_3 + x_3 + 1  $ & $\{0111, 1110\}$ &
		$x_1 \cdot x_2 \cdot x_3 + x_1 \cdot x_3 + x_3 + 1$ & $\{0110\}$ \\
		
		$12$ & $x_1 \cdot x_2 \cdot x_3 + x_1 \cdot x_3 + x_3 + 1 $ & $\{1011\}$ &
		$x_1 \cdot x_2 \cdot x_3 + x_2 \cdot x_3 + x_3 + 1$ & $\{0101\}$ \\
		
		$13$ & $x_1 \cdot x_2 \cdot x_3 + x_1 \cdot x_2 + x_1 \cdot x_3 + x_2 $ & $\{0000\}$ &
		$x_1 \cdot x_2 + x_1 \cdot x_3 + x_2 \cdot x_3 + x_2$ & $\{0000\}$ \\
		
		$14$ & $x_3 + 1 $ & $\{0101\}$ &
		$x_1 \cdot x_3 + x_1 + 1 $ & $\{1111\}$ \\
		\hline
	\end{tabular}
\end{table}
\end{document}